\newcommand{\jsim}{\mathit{sim}}
\title{Micro-Clustering: Finding Small Clusters in Large Diversity}
\author{Takeaki Uno$^1$, Hiroki Maegawa, Takanobu Nakahara$^2$, Yukinobu Hamuro$^3$, Ryo Yoshinaka$^4$, Makoto Tatsuta$^1$}
\institute{
\email{uno@nii.jp}, National Institute of Informatics, Japan
\and
Senshu University, Japan
\and
Kwansei Gakuin University, Japan
\and
Kyoto University, Japan
}
\begin{document}
\maketitle


\noindent
\begin{abstract}
We address the problem of un-supervised soft-clustering that
 we call {\em micro-clustering}.
The aim of the problem is to enumerate all groups composed of records
 strongly related to each other, while standard clustering methods
 find boundaries at where records are few.
The existing methods have several weak points; generating intractable amount
 of clusters, quite biased size distribution, robustness, etc.
We propose a new methodology {\em data polishing}.
Data polishing clarifies the cluster structures in the data by
 perturbating the data according to feasible hypothesis.
In detail, for graph clustering problems, data polishing replaces
 dense subgraphs that would correspond to clusters by cliques, and
  deletes edges not included in any dense subgraph.
The clusters are clarified as maximal cliques, thus easy to be found,
 and the number of maximal cliques is reduced to tractable numbers.
We also propose an efficient algorithm so that the computation
 is done in few minutes even for large scale data.
The computational experiments demonstrate the efficiency of our 
 formulation and algorithm, i.e., the number of solutions is small, such 
 as 1,000, the members of each group is deeply related, and the computation
  time is short.
\end{abstract}

\section{Introduction}

Unsupervised clustering is one of the central tasks in data analysis.
It is used in various areas and has many applications, such as Web access
 log analysis, recommendation, herding, and mobility analysis.
It is actually a data mining task since we can hopefully find some groups
 that are not known, or not easy to explain from background knowledge.
Usually, unsupervised clustering algorithms aim to partition the data
 into several groups, or to obtain distributions of group members.
However, this approach is often not efficient for big data with large
 diversity, since many elements may belong to several or no groups.
Another approach is to find groups of elements that are deeply related
 to each other.
Examples are community mining in social networks and biclustering by
 pattern mining.
We here call the groups of elements deeply related to each other
 {\em micro-clusters},
 and the problem of mining micro-clusters {\em micro-clustering}.

Modeling and problem formulation of micro-clustering are non-trivial.
For example, consider a similarity graph of a data in which the 
 vertices are elements, and two vertices are connected if the
 corresponding two elements are similar, or strongly related.
A clique (complete subgraph) of a similarity graph is a good
 candidate for micro-clustering, thus we are motivated to enumerate all
 maximal cliques in the graph, where a maximal clique is a clique
 included in no other clique\footnote{maximal clique and maximum cliques
 are different; a maximum clique is clique of the maximum size.}.
Community mining often uses maximal clique enumeration in social networks
 in which the cliques are considered as cores of the communities.
However, similarity graphs usually include a huge number of maximal
 cliques due to ambiguities, thus maximal cliques enumeration is often
 intractable in practice.
Some propagation methods\cite{BP,HITS} extract structures that can
 be seen as micro-clusters.
Then, completeness becomes a difficult issue.
Repetitive executions of these methods yield many similar groups
 corresponding to one group, and often miss some weak groups that are 
 hidden by intensive groups.
Global optimization approach such as modularity maximization\cite{Newman}
 or graph cut\cite{Metis}, 
To the best of our knowledge, there is no efficient method for
 micro-clustering.

One of the difficulty of micro-clustering is the problem formulation;
 the clusters in the data is hard to describe only with mathematical terms.
``Enumeration of maximal cliques'' involves numerous solutions, ``modularity
 maximization'' gives incentives to enlarge large clusters much more, 
 ``finding local dense structures by propagation (random walks)'' hides 
 small weak clusters surrounding a dense and relatively large cluster.
The form of clusters differ in many kinds of data and problems, and our 
 objectives, so it is naturally difficult, and there may be no general 
 answer.
On the other hand, we have known the result of the clustering algorithms
 are often good, and valuable to use for machine learning, visualization, 
 natural language processing, etc.
This fact implies that computing good clusters might be easier than 
 modeling the clusters in general terms.
According to the Vapnik's principle, now we might solve a difficult problem
 to solve a relatively easier problem, that is said to be not a good approach.
It is worth considering how to obtain good clusters without mathematically
 written problem formulation.

In this paper, instead of solving clustering problem directly,  we
 address the problem of eliminating the ambiguity to make the clustering easy.
We propose a new concept called {\em data polishing} that is to clarify
 the structures in data to be found by modifying the data according
  to feasible hypothesis.
For example, it modifies the input graph so that similar structures
 (groups, patterns, segments, etc.) considered to be the same one
 (having the same meaning) will be unified into one.
The goal of the data polishing is that we will have one to one
 correspondence between the objects (meanings) we want to find,
  and the structures modeling the objects.
The difference from data cleaning is that data cleaning usually correct the 
 each entity of the data to delete the noise, thus does not aim to change
 the solutions to the original data, while data polishing actively
 modifies the data so that the solutions will change.
For example, consider an example of community mining by finding
 cliques of size $k$.
A typical data cleaning deletes vertices of degrees less than $k-1$ so
 that no clique of size $k$ will be missed.
In a data polishing way, we consider that a community is a pseudo clique
 (dense subgraph) in the graph, thus replace pseudo cliques by cliques.
This drastically changes the result of the clique mining.
A community in an SNS network is a dense subgraph, but it usually has
 many missing edges, thus it is not a clique.
These missing edges are actually ``truth'' and not noise, thus they are
 difficult to recognize them as pairs that should be connected,
 with the data cleaning analogy.
However, a community is preferred to be a clique, in the sense of
 the motivation of modeling of communities.
Data polishing modifies the data so that the structures we want to find
 will be what we think they should be, instead of allowing 
 the individual record to be broken.
We observe that when the graph is clear such that it is composed only
 of cliques corresponding to communities, the mining results will be fine.
This is the origin of the idea of data polishing.

The enumeration of pseudo cliques is actually a hard task; graphs often
 include much more maximal pseudo cliques compared to maximal
  cliques\cite{pce}.
Instead of enumerating pseudo cliques, we use a feasible hypothesis;
 two vertices have many common neighbors in the graph if they are
  included in a dense subgraph of a certain size
  (see Figure \ref{dense}).
We can also use a similarity of neighbors instead of the number of
 common neighbors.
It is possible that two vertices are included in a dense subgraph 
 but have few common neighbors for example a graph composed of
  a vertex $v$ and a clique.
In such cases, $v$ and the clique should not be in the same
 micro-cluster, thus we think this hypothesis is acceptable.
We find all vertex pairs having at least a certain number of common
 neighbors (or the similarity of neighbors is no less than the
  given threshold), and connect each pair by an edge.
On the other hand, we delete all edges whose endpoints do not satisfy the
 condition, since they are considered as not being in the same cluster.
We repeat this operation until the graph will not change, and obtain
 a ``polished'' graph.

In summary, our micro-clustering algorithm is composed of three parts,
 (1) construction of the similarity graph, (2) data polishing,
 and (3) maximal clique enumeration.
As the number of maximal cliques becomes small through data polishing, 
 (3) is not a difficult task.
The computational cost of (1) and (2) depends on the choice of the
 similarity measure and the definition of data polishing.
Usually these tasks are not heavy because the data and graph are usually 
 sparse.
Our computational experiments show the efficiency of our data polishing
 algorithm.
For example, it reduces the number of maximal cliques in a social
 network from 33,000 to 300 and in a similarity graph of news articles 
 from over 55 million to 100,000.
The contents of the cluster are acceptable and understandable, at least 
 in our computational experiments.
The detection accuracy of clusters for randomly generated data is
 significantly better compared to existing algorithms.
The quality and the performance of our algorithm drastically outperforms 
 the existing algorithms.

\begin{figure}[t]
  \begin{center}
  \includegraphics[scale=0.45]{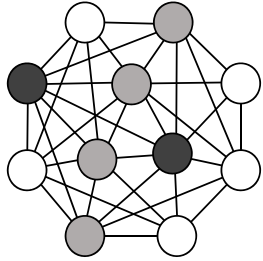}
  \end{center}
  \caption{Two vertices (black) in a dense graph have many (4) common neighbors (gray)}\label{dense}
\end{figure}

The organization of this paper is as follows.
The next section introduces the notations.
In Section 3, we describe our model of data polishing for
 micro-clustering in a graph, and describe a fast algorithm for data
 polishing and similarity graph construction in Section 4.
Section 5 explains an algorithm for maximal clique enumeration.
In Section 6, we show the results of computational results, and discuss
 the result in Section 7.
We conclude the paper in Section 8.

\subsection{Related Works}

There are many unsupervised clustering algorithms.
Typical algorithms partition data into two or several
 groups by learning the boundaries of the groups.
A typical approach to obtain many clusters is to recursively apply these
 clustering algorithms.
However, in the top levels of the partition, finding a boundary composed
 of boundaries of many clusters is very difficult, and micro-clusters
 are often broken by global cuts.

For finding clusters in a similarity graph, Girvan-Newman clustering
 algorithm\cite{Newman} and its many variants are often used.
The algorithms partition the data into many groups according to the
 {\em modularity}, i.e., a model of clustering quality.
However, they often produce few very large groups that are a
 mixture of many micro-clusters and many quite small groups.
We observed this in our computational experiments.
 
Some random walk algorithms and propagation algorithms\cite{BP,HITS}
 find a cluster from a seed vertex.
The obtained cluster can be seen as a main cluster among those to which
 the seed vertex belongs.
For enumeration, we often apply these algorithms to all vertices
 individually.
This results in many quite similar groups that correspond to one cluster;
 on the other hand, weak clusters hidden by strong clusters may be missed.
Moreover, even one execution of these algorithms takes sufficiently
 long time, the total computation time becomes huge.
 
Some researchers attempted to enumerate maximal cliques from graphs.
For example, Kumar et al. enumerated bipartite maximal cliques from
 Web link graphs\cite{KmRhRjTk99}.
However, the number of solutions is usually so huge.
In our experiments, the number of maximal cliques in a similarity graph
 from Reuters news articles was above 50,000,000.
There are currently no algorithm to drastically reduce the number
 of maximal cliques.

An idea to remove edges from the network so that extraction of
 clusters will become easy has been proposed by Satuluri et. al.
 \cite{sparsify}, that is called {\em local sparsification}.
The idea to identify the edges to be removed from the network is 
 same as our method, but they do not add edges, and do not repeat
 the process, thus the result would be quite different.
Also a similar idea exists that is called stochastic flow clustering
 method\cite{sfc}.
The idea is update distance matrix iteratively so that the items belonging 
to the same cluster would have distance of zero.
Compared to this, our method can be considered as a discrete version
 with allowing scalable algorithms applicable to large scale graph data.

\section{Preliminary}

A {\em database} $D$ is a collection of {\em records} $T_1,\ldots,T_n$.
A record can be any kinds of structures, but we restrict the database 
 such that each record belongs to the same class of structures, such as
 itemsets, sequences, and labeled graphs.
For a similarity measure $sim$, the {\em similarity graph} $G_{sim}(D)$ 
 of the database $D$ is the graph such that the vertex set is its records,
 and an edge connects two records $u$ and $v$ if and only if
  $sim(u,v) \ge \theta$.
We simply write the similarity graph $G=(V,E)$ and assume that
 $V=\{1,...,n\}$ if there is no confusion.
A {\em clique} is a set of vertices such that every pair of vertices
 is connected by an edge.
Cliques are usually defined by a subgraph, but in this paper we 
 use this definition for conciseness.
A clique included in no other clique is called {\em maximal}.
The size of a clique is the number of its vertices.

A pair of vertices not connected by an edge is called {\em non-edge}.
The {\em density} of a graph $G=(V,E)$ is the ratio of edge existences,
 i.e., $\frac{|E|}{|V|(|V|-1)/2}$.
The density of a vertex set $U$ is defined by the density of the subgraph
 $G[U] = (U, E[U])$, where $E[U]$ is the set of edges that are connecting
 two vertices in $U$.
A {\em pseudo clique} is a vertex set having sufficiently large density,
 such as that with density no less than the threshold $\delta$.
We call a vertex set of density $\delta$ {\em $\delta$-pseudo clique}.

For vertex $v$ in $G$, a vertex adjacent to $v$ is called a
 {\em neighbor} of $v$.
The set of neighbors of $v$ is denoted by $N(v)$.
$N[v]$ denotes $N(v)\cup \{ v\}$ and is called the {\em closed neighbor}.
A vertex $w$ is a {\em common neighbor} of vertices $u$ and $v$ if 
 $w$ is adjacent to both $u$ and $v$. 
The {\em degree} of a vertex $v$ is the number of vertices adjacent
 to $v$, and denoted by $d(v)$.
For a subgraph $K$ of $G$, the degree of $v$ in $K$ is the number of
 vertices in $K$ that are adjacent to $v$ and denoted by $d_K(v)$.
The minimum degree in $K$ is $\min_{v\in K} d_K(v)$.

\section{Data Polishing for Micro-Clustering}

We first state our micro-clustering problem from conceptual aspects, and 
 propose our data polishing algorithms to clarify the clusters.
We also state some properties that are mathematical aspects of our
 algorithms.
Note that the statements of properties are of the worst cases, thus
 they do not mean the average cases.
Micro-clusters are groups of data records that are similar or related
 to each other, and desired to have one meaning, or correspond to 
 one group.
By considering the applications of micro-clustering, 
 micro-clusters should satisfy the following conditions.

\begin{itemize}
\item[1] quantity (the number of micro-clusters found should not be huge)
\item[2] independence (micro-clusters should not be similar to each other)
\item[3] coverage (all micro-clusters should be found)
\item[4] granularity (the granularity of micro-clusters should be the same)
\item[5] rigidity (the micro-clusters found should not change due to
 un-essential changes such as random seeds or indices of records)
\end{itemize}

From conditions 3 and 4, we approach the micro-clustering problem by
 structural enumeration, since structures such as trees and cliques
 are considered to correspond micro-clusters with similar granularity.
Condition 5 leads us not to use randomness and computation depends 
 on the vertex ID ordering.
For the remaining conditions 1 and 2, we can observe that the ambiguity
 in the data makes algorithms violating the conditions; noise
 drastically increase the structures to be enumerated, and the
 structures are too similar thereby not independent.
In a graph, micro-clusters are considered to correspond to dense subgraphs,
 and the non-edges in the dense subgraphs are ambiguity.
We also consider that edges included in no clusters are also ambiguity. 
The concept of data polishing for micro-clustering comes from these;
 add edges for these non-edges, and remove these edges from the graph.
For identifying these non-edges and edges, we consider the following
 feasible hypothesis.

If vertices $u$ and $v$ are in the same clique of size $k$, $u$ and $v$
 have at least $k-2$ common neighbors.
Thus, we have $|N[u]\cap N[v]| \ge k$, and this is a necessary condition 
 that $u$ and $v$ are in a clique of size at least $k$.
We call this condition {\em $k$-common neighbor condition}.
If $u$ and $v$ are in a sufficiently large pseudo clique, they are also
 expected to satisfy this condition.
For example, the average degree in a pseudo clique of size $2k$ whose edge
 density is $80\%$ is $1.6k$, thus any of its two vertices satisfy the
  condition with high probability.
On the contrary, if two vertices do not satisfy the condition, they
 belong to a pseudo clique with very small probability.
Even though they belong to a pseudo clique, they actually seem to be
 disconnected in the clique, thus we may consider they should not be 
 in the same cluster.
Let $P^k(G) = (V, E')$ where $E'$ is the correction of edges connecting
 vertex pairs satisfying the $k$-common neighbor condition, and the 
 polishing process is the computation of $P^k(G)$ from $G$.
We call this process {\em $k$-intersection polishing}.
 
The following property assures that we can find pseudo cliques as cliques
 in the processed graph.

\begin{property}
A vertex set $K\subseteq V$ of $\gamma k$ vertices is a clique in 
 $P^k(G)$ when the minimum degree of $K$ in $G$ is at least $(\gamma + 1)k/2$.
\end{property}

\proof
Any two vertices $u$ and $v$ can have at most
$2 \times (\gamma k-1 - \frac{(\gamma + 1)k}{2}) = \gamma k -2 - k$
 non-common neighbors.
Thus, $|N[u]\cap N[v]| \ge k$.
\qed

Note that the density of $K$ is always no less than $(\gamma + 1)/2\gamma$.

\begin{theorem}
Let $K$ be a $\delta$-pseudo clique ($\delta < 1$) in $G$ of size $\gamma k$
 where the degree of each node is at least $(\gamma+1)k/s-1$ for
 $\gamma > 1$ and $s > 2$.
The density of $K$ in $P^k(G)$ is more than that in $G$
 if $1-\delta \le \frac{(s-1)(\gamma-1)^2}{s^2\gamma^2}$.
\end{theorem}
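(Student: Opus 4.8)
The plan is to recast the claim as a comparison between the number of vertex pairs that fail the $k$-common neighbor condition and the number of non-edges of $K$. Writing $M=\binom{\gamma k}{2}$, the density of $K$ in $G$ is $\delta$, so $K$ has $\delta M$ edges and $(1-\delta)M$ non-edges. In $P^k(G)$ the pairs of $K$ that carry an edge are exactly the $(u,v)$ with $|N[u]\cap N[v]|\ge k$; call the remaining pairs \emph{failing}. Since data polishing may both add and delete edges, I would not track additions and deletions separately: if $f$ is the number of failing pairs inside $K$, then $K$ has $M-f$ edges in $P^k(G)$, and the density strictly increases if and only if $M-f>\delta M$, i.e. $f<(1-\delta)M$. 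Thus it suffices to show the number of failing pairs is smaller than the number of non-edges.

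Next I would bound, for each pair, the number of common neighbours lying inside $K$, exactly as in the proof of the preceding property. For $w\in K$ let $\bar d_K(w)=\gamma k-1-d_K(w)$ be its number of non-neighbours in $K$; the degree hypothesis gives $\bar d_K(w)\le \bar d:=\gamma k-(\gamma+1)k/s$. A vertex of $K\setminus\{u,v\}$ is a common neighbour of $u$ and $v$ unless it misses $u$ or misses $v$, so $|N[u]\cap N[v]\cap K|\ge (\gamma k-2)-\bar d_K(u)-\bar d_K(v)$. Consequently a pair can be failing only if $\bar d_K(u)+\bar d_K(v)\ge(\gamma-1)k-1$. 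This is the quantitative substitute for the ``$k/2$ versus $k/s$'' gap and is where the hypothesis $s>2$ enters, since for $s\le 2$ no pair can fail and the previous property already applies unchanged.

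The remaining work is an extremal counting problem. Setting $y_w=\bar d_K(w)$, I have the two global constraints $0\le y_w\le\bar d$ and $\sum_{w\in K}y_w=2(1-\delta)M=(1-\delta)\gamma k(\gamma k-1)$ (twice the number of non-edges), and I must bound the number of pairs with $y_u+y_v\ge(\gamma-1)k-1$. The hard part is carrying out this count sharply enough: the non-neighbour ``mass'' is worst-case concentrated, so I would bound the number of vertices that can sit in a failing pair (those whose $y_w$ is large enough that some admissible partner, capped at $\bar d$, pushes the sum past the threshold) and then count pairs among them, using the cap $\bar d$ and the total mass together. Imposing that this bound stay below $(1-\delta)M$ and simplifying should reduce exactly to $1-\delta\le\frac{(s-1)(\gamma-1)^2}{s^2\gamma^2}$; matching this precise algebraic form, rather than a cruder $s$-free estimate, is the main obstacle, and I expect it to hinge on choosing the counting estimate (a per-vertex threshold versus a pairwise one) that reproduces the factor $(s-1)/s^2$.
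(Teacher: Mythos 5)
Your reduction is sound and coincides with the paper's starting point: the density of $K$ rises exactly when the number of failing pairs inside $K$ (those with $|N[u]\cap N[v]|<k$) is smaller than the number of non-edges of $K$ in $G$, and a pair can fail only if $\bar d_K(u)+\bar d_K(v)\ge(\gamma-1)k-1$. But the proof stops before its real content. The extremal count is left as something that ``should reduce exactly to'' the stated inequality, and the one concrete strategy you sketch --- a single per-vertex threshold (large enough that a partner capped at $\bar d$ can complete a failing pair), a Markov bound on how many vertices exceed it, then counting all pairs among those candidates --- provably cannot reach the stated constant. Quantitatively, your per-vertex threshold is $(\gamma-1)k-1-\bar d\approx(\gamma+1-s)k/s$; Markov then allows roughly $s(1-\delta)\gamma^2k/(\gamma+1-s)$ candidates, and counting all pairs among them gives the sufficient condition $1-\delta\lesssim(\gamma+1-s)^2/(s^2\gamma^2)$. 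This is strictly more restrictive than the claimed $(s-1)(\gamma-1)^2/(s^2\gamma^2)$ (e.g.\ $4/144$ versus $18/144$ at $s=3$, $\gamma=4$) and becomes vacuous once $s\ge\gamma+1$. The loss is structural: two vertices each barely above the per-vertex threshold have non-degree sum about $2(\gamma+1-s)k/s$, typically far below the pairwise barrier $(\gamma-1)k$, so ``all pairs among candidates'' grossly overcounts failing pairs.

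The missing idea, which is how the paper completes the count, is an \emph{asymmetric two-threshold partition} rather than a single cutoff. Split $K$ into $P$ (degree below $(\gamma+1)k/2$), $Q$ (degree between $(\gamma+1)k/2$ and $(s-1)(\gamma+1)k/s$) and $R$ (the rest). A failing pair has degree sum below $(\gamma+1)k$, so it must have one endpoint in $P$ and the other in $P\cup Q$; hence $M'\le pq+p(p-1)/2$ with $p=|P|$, $q=|Q|$. On the other side, every vertex of $P$ or $Q$ forces many non-edges of $K$ in $G$, which lower-bounds the non-edge count $M$ linearly in $p$ and $q$; solving that bound for $q$ and substituting leaves a concave quadratic in $p$ whose maximum is $\frac{2s^2M^2}{(s-1)(\gamma-1)^2k^2}$. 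Combining with $M<(1-\delta)\gamma^2k^2/2$ yields $M'/M<1$ precisely when $1-\delta\le\frac{(s-1)(\gamma-1)^2}{s^2\gamma^2}$. It is this interplay --- one threshold at half the pairwise barrier, the other at an $(s-1)/s$ fraction of it, followed by the quadratic optimization over $p$ --- that produces the factor $(s-1)/s^2$; no single-threshold counting of the kind you propose does.
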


\proof
Partition the vertices of $K$ into
\begin{eqnarray*}
 P &=& \{\, v \in V \mid (\gamma+1)k/s \le d_K(v) < (\gamma+1)k/2 \,\},\\
 Q &=& \{\, v \in V \mid (\gamma+1)k/2 \le d_K(v) < (s-1)(\gamma+1)k/s \,\},
  \mbox{and}\\
 R &=& \{\, v \in V \mid (s-1)(\gamma+1)k/s \le d_K(v) \,\}\,.
\end{eqnarray*}
Let $p = |P|$ and $q = |Q|$.
Concerning the numbers $M$ and $M'$ of non-edges in $K$ and $K$ in $P^k(G)$,
 respectively, we have the following:
\begin{eqnarray}
	M &=& (1-\delta)\frac{\gamma k(\gamma k - 1)}{2} < (1-\delta)\frac{\gamma^2k^2}{2} 
	 \label{M1}\\
M &>& \frac{1}{2}\left({\left(\gamma k - \frac{(\gamma+1)k}{2}\right)p +
 \left(\gamma k - \frac{(s-1)(\gamma+1)k}{s}\right)q}\right)\\
 &=& \left(\frac{p}{4}+\frac{q}{2s}\right)(\gamma-1)k\,,
\label{M2}
\\	M' &\le& pq+\frac{p(p-1)}{2} \le p\left(q+\frac{p}{2}\right)\,.
\label{M3}
\end{eqnarray}
From Eq.~(\ref{M2}), we have $q < \frac{2sM}{(\gamma - 1) k} - \frac{s}{2}p$.
Together with Eq. ~(\ref{M3}), 
\begin{eqnarray*}
M' &<& p\left( \frac{2sM}{(\gamma - 1) k} - \frac{s-1}{2}p \right)
\\ &=& - \frac{s-1}{2}\left(p - \frac{2sM}{(s-1)(\gamma - 1) k} \right)^2 + \frac{2(sM)^2}{(s-1)((\gamma - 1) k)^2}
\\	&\le& \frac{2s^2M^2}{(s-1)(\gamma - 1)^2 k^2}\,.
\end{eqnarray*}
Therefore, we have
$\frac{M'}{M} < \frac{2s^2M}{(s-1)(\gamma - 1)^2 k^2}
 < \frac{s^2 (1-\delta)\gamma^2 k^2}{(s-1)(\gamma - 1)^2 k^2} \le 1$
from Eq.~(\ref{M1}) and $(1-\delta) \le \frac{(s-1)(\gamma-1)^2}{s^2\gamma^2}$.
\qed

\begin{corollary}
Let $K$ be a $8/9$-pseudo clique of $\gamma k$ vertices whose minimum
 degree is at least $(\gamma+1)k/3$.
The density of $K$ always increases by $k$-intersection polishing if 
 $\gamma > 2+2^{1/2} \simeq 3.4142$.
\end{corollary}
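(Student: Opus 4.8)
The plan is to recognize this as a direct specialization of the preceding Theorem rather than an independent argument. Specifically, I would set $s = 3$ and $\delta = 8/9$, so that $1 - \delta = 1/9$, and then verify that both hypotheses of the Theorem hold under the corollary's assumptions; the conclusion then follows immediately from the Theorem.

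First I would dispatch the degree hypothesis. The Theorem requires each node of $K$ to have degree at least $(\gamma+1)k/s - 1$, which with $s = 3$ reads $(\gamma+1)k/3 - 1$. The corollary assumes minimum degree at least $(\gamma+1)k/3$, which clearly exceeds $(\gamma+1)k/3 - 1$, so this hypothesis is satisfied with room to spare and requires no computation. Next I would handle the density-increase condition. Substituting $s = 3$ and $1 - \delta = 1/9$ into the Theorem's inequality $1 - \delta \le \frac{(s-1)(\gamma-1)^2}{s^2\gamma^2}$ gives
$$\frac{1}{9} \le \frac{2(\gamma-1)^2}{9\gamma^2}.$$
Clearing the common factor $1/9$ and cross-multiplying by $\gamma^2 > 0$ reduces this to $\gamma^2 \le 2(\gamma-1)^2$, i.e.\ to the quadratic inequality $\gamma^2 - 4\gamma + 2 \ge 0$, whose roots are $2 \pm \sqrt{2}$.

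The only point demanding care is the direction of this quadratic and the choice of root. The parabola $\gamma^2 - 4\gamma + 2$ opens upward, so it is nonnegative outside the interval $(2-\sqrt{2},\,2+\sqrt{2})$; since we are in the regime $\gamma > 1$, the relevant branch is $\gamma \ge 2 + \sqrt{2} \simeq 3.4142$. Thus the corollary's hypothesis $\gamma > 2 + \sqrt{2}$ yields the Theorem's density condition (indeed strictly), and applying the Theorem with $s = 3$ and $\delta = 8/9$ completes the argument. There is no genuine obstacle here beyond this algebraic bookkeeping; as a sanity check I would plug $\gamma = 2 + \sqrt{2}$ back in to confirm that it produces equality in the reduced inequality, matching the boundary of the corollary's strict hypothesis.
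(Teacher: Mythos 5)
Your proposal is correct and is exactly the argument the paper intends: the corollary is stated without proof as an immediate specialization of the preceding theorem, obtained by taking $s=3$ and $\delta = 8/9$, so that the condition $1-\delta \le \frac{(s-1)(\gamma-1)^2}{s^2\gamma^2}$ reduces to $\gamma^2 - 4\gamma + 2 \ge 0$, i.e.\ $\gamma \ge 2+\sqrt{2}$ on the branch $\gamma > 1$. Your verification of the degree hypothesis and the root selection is sound, so there is nothing to add.
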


From the theorem and corollary, we see that any sufficiently large 
 pseudo clique with sufficiently large density increases its density
 in $P^k(G)$.
Since the condition of the theorem is far from tight, we can expect that
 the densities of smaller subgraphs also increase in real world data.
Hence, by repeatedly computing $P^k(G)$, $P^k(P^k(G))$ and so on, 
 we hope that it will converge to a graph $P^{k*}(G)$ satisfying
 $P^{k*}(G) = P^k(P^{k*}(G))$.
However, there is a counter example, thus we choose a
 number $\tau$ and repeat the computation at most $\tau$ times.

\begin{figure}[t]
  \begin{center}
  \includegraphics[scale=0.45]{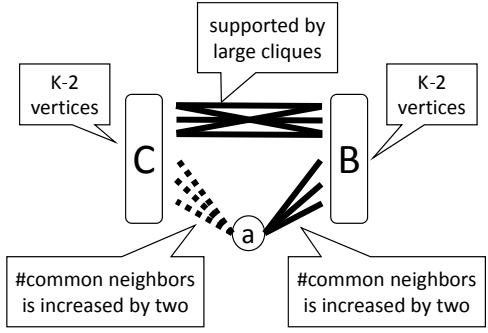}
  \end{center}
  \caption{Example of a graph that $k$ intersection polishing does not converge}\label{non-converge}
\end{figure}
 
\begin{theorem}
There is a graph $G'=(V',E')$ such that $P^k(G'') \ne G''$ holds 
 for any $G'' = P^k(P^k(\cdots P^k(G')\cdots))$ for $k>4$.
\end{theorem}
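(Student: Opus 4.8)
The plan is to exploit the fact that $P^k$ is a deterministic map on the finite set of graphs with vertex set $V'$, so the orbit $G', P^k(G'), P^k(P^k(G')),\dots$ is eventually periodic. The assertion ``$P^k(G'')\neq G''$ for every iterate $G''$'' says exactly that this orbit never reaches a fixed point, and the cleanest way to guarantee this is to build $G'$ whose orbit is \emph{purely periodic of period two}: two graphs $G_0=G'$ and $G_1=P^k(G_0)$ with $G_0\neq G_1$ and $P^k(G_1)=G_0$, so the orbit oscillates $G_0,G_1,G_0,G_1,\dots$ and can never equal a fixed point. The lever I would use is the hysteresis built into the closed-neighbour condition: for a pair $u,v$ one has $|N[u]\cap N[v]|=c(u,v)+2$ when $u,v$ are adjacent and $|N[u]\cap N[v]|=c(u,v)$ when they are not, where $c(u,v)$ is the number of common neighbours. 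Hence an existing edge survives in $P^k$ iff $c(u,v)\ge k-2$, while a non-edge is created iff $c(u,v)\ge k$. This gap of two is what lets a pair flip in one direction and flip back in the next step.

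First I would classify the pairs allowed to change across one application of $P^k$ into two kinds. A \emph{type~A} pair is an edge of $G_0$ that is deleted in $G_1$ and restored in $G_2=G_0$; this forces $c(u,v)\le k-3$ in $G_0$ but $c(u,v)\ge k$ in $G_1$, i.e. its common-neighbour count must rise by at least three between the two states. A \emph{type~B} pair is a non-edge of $G_0$ that is created in $G_1$ and removed again; this forces $c(u,v)\ge k$ in $G_0$ and $c(u,v)\le k-3$ in $G_1$, so its count must fall by at least three. The two families are coupled, because creating the type~B edges supplies the extra common neighbours that push the type~A pairs up, and simultaneously deleting the type~A edges removes the common neighbours that pull the type~B pairs down. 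I would realise this in a highly symmetric gadget --- the graph drawn in Figure~\ref{non-converge} --- assembled from a fixed dense core together with a symmetric collection of blinking pairs, so that all counts of a given type are equal and can be placed exactly at the values $k-1,k-2,k-3$ by choosing the multiplicities of the pieces; the hypothesis $k>4$ gives enough room to separate these thresholds.

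The verification is then a finite case check: for each isomorphism type of pair I would compute $|N[u]\cap N[v]|$ in $G_0$ and in $G_1$ and confirm it lands on the correct side of the relevant threshold, so that $P^k(G_0)=G_1$, $P^k(G_1)=G_0$, and $G_0\ne G_1$ all hold simultaneously. I expect the genuine obstacle to be global consistency rather than any single computation: toggling a pair $(u,v)$ changes $N(u)$ and $N(v)$ and therefore perturbs the common-neighbour count of every pair meeting $\{u,v\}$, so the blinking set cannot be chosen pair by pair. The work is to show that \emph{every} non-blinking edge keeps $c\ge k-2$ and every non-blinking non-edge keeps $c\le k-1$ in both states (so that they are invariant), while the type~A and type~B pairs move in lock-step and reverse. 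Enforcing all of these inequalities at once is what pins down the construction, and making the counting uniform through the symmetry of the gadget in Figure~\ref{non-converge} is the device I would rely on to carry it through.
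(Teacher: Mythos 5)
Your overall strategy coincides with the paper's: the paper also proves the theorem by exhibiting a graph whose orbit under $P^k$ is purely periodic with period two, and your hysteresis analysis (an existing edge survives iff the pair has at least $k-2$ common neighbours, while a non-edge is created iff it has at least $k$) together with the type~A/type~B classification is correct and is exactly the mechanism at work in the paper's example. The problem is that this is where your proposal stops. The theorem is an existence statement, and you never exhibit a graph: you appeal to ``the graph drawn in Figure~\ref{non-converge}'' without describing its structure, and you say the counts ``can be placed exactly at the values $k-1,k-2,k-3$ by choosing the multiplicities of the pieces'' without specifying the pieces, the multiplicities, or why such a choice is simultaneously satisfiable (and note that blinking pairs must in fact reach count $\ge k$ in one of the two states, a value absent from your list). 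You yourself identify the real difficulty --- toggling any pair perturbs the common-neighbour counts of every pair meeting it, so all inequalities must be enforced at once --- and you leave that difficulty unresolved. As it stands this is a correct proof plan, not a proof.

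What fills the gap in the paper is a concrete gadget with two features your outline lacks. Take a vertex $a$ and two independent sets $B$, $C$ of $k-2$ vertices each; join $a$ to all of $B$ and $B$ completely to $C$. Invariant pairs are pinned by \emph{private} cliques: for every $u\in B$ and $v\in C$, a new clique of size $k+2$ is attached through the edge $(u,v)$, forcing $|N[u]\cap N[v]|\ge k+2$ regardless of what happens elsewhere. Blinking pairs are tuned by attaching, for each $v\in B\cup C$, two new vertices $w_1,w_2$ adjacent to both $a$ and $v$, with each of the four edges $(a,w_i)$, $(w_i,v)$ again carried by its own private $(k+2)$-clique. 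A direct count then gives $|N[a]\cap N[b]|=4$ for $b\in B$ (an edge, which dies since $k>4$) and $|N[a]\cap N[c]|=k$ for $c\in C$ (a non-edge, which is born), while pairs inside $B$ or inside $C$ stay at $\le k-1$ and all attached-clique edges stay at $\ge k+2$, so only the $a$--$B$ and $a$--$C$ pairs flip. The second missing feature is how the paper avoids verifying the return step separately: by construction $P^k(G)$ is \emph{isomorphic} to $G$ under the exchange of $B$ and $C$, so the oscillation continues forever by symmetry. The private cliques are precisely what defeat your ``global consistency'' obstacle, since each one meets the rest of the graph in a single edge and thus decouples the counts, and the built-in $B\leftrightarrow C$ symmetry is what makes the finite case check you describe short enough to actually carry out.
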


\proof 
We describe the example shown in Figure \ref{non-converge}.
In the figure, $B$ and $C$ are a set of $k-2$ vertices such that
 vertex $(\{a\},B)$ and $(B,C)$ induce complete bipartite graphs where 
 no vertex pair in $B$ or $C$ is connected.
For each $u\in B$ and $v\in C$, we put a clique of size $k+2$
 so that the clique includes $(u,v)$ but no other vertex in the graph.
This is to make $u$ and $v$ always have at least $k+2$ common neighbors.
For each $v\in B\cup C$, we prepare vertices $w_1$ and $w_2$, and
 put two cliques of size $k+2$ including only $(a,w_i)$ and $(w_i,v)$
 in the same manner.
This is to increase the common neighbors of $a$ and $v$ by two.

Let us consider $|N[u]\cap N[v]|$ for each vertex pair $(u,v)$.
For each edge $(u,v)$ in an attached clique, $|N[u]\cap N[v]|\ge k+2$.
Each vertex $u\not \in \{a\}\cup B\cup C$ satisfies $|N[u]\cap N[v]|\le 2$
 for any vertex $v$ not in the clique that includes $u$.
Any vertex pair $(u,v)$ taken from $B$ or $C$ satisfies
 $|N[u]\cap N[v]| \le k-1$.
For each $v\in B$, we have $|N[a]\cap N[v]|=4$, and
 for each $v\in C$, we have $|N[a]\cap N[v]|=k$.
Hence, when we apply one iteration of $k$-intersection polishing
 with $k>4$, all edges between $a$ and $B$ disappear and all non-edges between 
 $a$ and $C$ become edges.
The edge set of the new graph is different from the original graph, but
 the new graph is isomorphic to the original.
Therefore, the statement holds.
\qed

It is worth mentioning that this statement holds if $k$-intersection 
polishing is defined by a usual neighbor but not a closed neighbor,
 i.e., $N(u)\cap N(v)$.
\qed


We call a graph $G$ {\em polished} if it does not change by data
 polishing process.
In particular, we call $G$ {\em $k$-intersection polished} if $G$
 satisfies $G = P^k(G)$.

\begin{theorem}
A $k$-intersection polished graph $G=(V,E)$ has at most $O(|V|^k)$
 maximal cliques.
\end{theorem}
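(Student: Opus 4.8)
The plan is to exploit the polished condition in the direction it constrains non-edges, and to convert that constraint into a statement about how much two maximal cliques can overlap; the count then follows from a simple injective encoding by small vertex sets. Recall that $G = P^k(G)$ means precisely that a pair $(u,w)$ is an edge if and only if $|N[u]\cap N[w]|\ge k$, so every \emph{non}-edge $(u,w)$ satisfies $|N[u]\cap N[w]|\le k-1$. This last inequality is the only feature of polishing I will use.

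First I would prove the key structural lemma: any two distinct maximal cliques $C_1$ and $C_2$ satisfy $|C_1\cap C_2|\le k-1$. Given distinct maximal cliques, neither contains the other, so I can pick $u\in C_1\setminus C_2$. Since $C_2$ is maximal, $C_2\cup\{u\}$ is not a clique, hence $u$ has a non-neighbor $w\in C_2$; and $w\notin C_1$, since otherwise $u$ and $w$ would be two distinct vertices of the clique $C_1$ and therefore adjacent. Thus $(u,w)$ is a non-edge with $u\in C_1\setminus C_2$ and $w\in C_2\setminus C_1$. Now every $x\in C_1\cap C_2$ is distinct from both $u$ and $w$ (as $u\notin C_2$ and $w\notin C_1$) and is adjacent to both (sharing the clique $C_1$ with $u$ and the clique $C_2$ with $w$), so $C_1\cap C_2\subseteq N(u)\cap N(w)\subseteq N[u]\cap N[w]$. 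The polished non-edge bound then gives $|C_1\cap C_2|\le |N[u]\cap N[w]|\le k-1$.

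Next I would turn this into the count. For each maximal clique $C$ of size at least $k$, fix by any deterministic rule (say, the lexicographically smallest ones) a $k$-element subset $S_C\subseteq C$. If two such cliques produced the same subset $S_C=S_{C'}=S$, then $S\subseteq C\cap C'$ would force $|C\cap C'|\ge k$, contradicting the lemma; hence $C\mapsto S_C$ is injective into the $k$-subsets of $V$, and the number of maximal cliques of size at least $k$ is at most $\binom{|V|}{k}=O(|V|^k)$. The remaining maximal cliques each have size at most $k-1$ and so form a subfamily of the vertex subsets of size less than $k$, of which there are at most $\sum_{i=0}^{k-1}\binom{|V|}{i}=O(|V|^{k-1})$. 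Adding the two estimates yields $O(|V|^k)$ maximal cliques in total.

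I expect the structural lemma to be the only real obstacle: the counting step is routine once pairwise intersections are bounded, whereas the lemma requires the slightly delicate observation that for distinct maximal cliques one can always locate a \emph{separating} non-edge $(u,w)$ lying entirely in the symmetric difference, together with the verification that $C_1\cap C_2$ embeds into the common closed neighborhood $N[u]\cap N[w]$ so that the polished bound is applicable. Everything else reduces to elementary set counting.
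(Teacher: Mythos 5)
Your proof is correct, and its skeleton---counting maximal cliques of size at least $k$ by an injective assignment of $k$-subsets, plus a trivial polynomial bound for the smaller ones---is the same as the paper's. Where you genuinely differ is in how injectivity is justified. The paper proves the equivalent fact that any $k$-set $S$ lies in at most one maximal clique by using the polished condition in its edge-forcing direction: if $S$ is contained in two cliques $K$ and $K'$, then every $u\in K$ and $v\in K'$ satisfy $S\subseteq N[u]\cap N[v]$, hence $|N[u]\cap N[v]|\ge k$, hence $(u,v)\in E$, so $K\cup K'$ is itself a clique and two distinct maximal cliques can never share $k$ vertices. You instead use the contrapositive (non-edge) direction: from maximality you extract a separating non-edge $(u,w)$ with $u\in C_1\setminus C_2$ and $w\in C_2\setminus C_1$, observe $C_1\cap C_2\subseteq N[u]\cap N[w]$, and invoke $|N[u]\cap N[w]|\le k-1$. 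The two key lemmas are logically equivalent, but the arguments buy slightly different things: the paper's merging argument applies to arbitrary (not necessarily maximal) cliques sharing a $k$-set and explicitly exhibits the larger clique, which is the more constructive statement; yours isolates cleanly that the only feature of polishing needed is the upper bound on common closed neighborhoods across non-edges, at the cost of the extra (correctly handled) step of locating a witness non-edge in the symmetric difference. Your $O(|V|^{k-1})$ bound for maximal cliques of size below $k$ is also marginally sharper than the paper's $n^k$ estimate, though both are absorbed into $O(|V|^k)$.
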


\proof
Suppose that a vertex set $S, |S|=k$ is included in two distinct
 cliques $K$ and $K'$.
Then, for any two vertices $u\in K$ and $v\in K'$, $S\subseteq N[v], N[u]$
 hold, thus $|N[u]\cap N[v]| \ge k$.
This implies that such $u$ and $v$ are always connected by an edge, and 
 thus $K\cup K'$ is a clique.
Hence, we can observe that any vertex subset $S\subseteq V$ of size $k$
 is included in at most one maximal clique.
Since any maximal clique of size at least $k$ includes at least one vertex 
 subset of size $k$, we can see that the number of maximal cliques of size
  at least $k$ is at most $_nC_k$.
The number of cliques of size at most $k-1$ is bounded by $n^k$, the number 
 of maximal cliques in $G$ is bounded by $_nC_k + n^k = O(n^k)$.
Therefore, the statement holds.
\qed

This theorem demonstrates that $k$-intersection polishing reduces
 the number of maximal cliques.
Moreover, a limited number of maximal cliques makes some problems easy.
For example, we can say that maximal clique enumeration in $G$ can be solved
 in polynomial time in the size of $G$, since the enumeration is done 
 in linear time in the number of maximal cliques\cite{mace}.
We can also observe a tractability on the maximum clique problem.

\begin{corollary}
The maximum clique problem in a graph $G$ polished by $k$-intersection
 polishing can be solved in polynomial time, $O(n^{k+2.376})$ time in
 particular.
\end{corollary}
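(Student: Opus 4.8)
The plan is to reduce the maximum clique problem to maximal clique enumeration and then appeal to the preceding theorem. The guiding observation is that a maximum clique is always maximal: a clique of maximum cardinality admits no extension, so the maximum-size clique occurs among the maximal cliques of $G$. Hence it suffices to list every maximal clique, record its size, and return the largest one found.

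First I would invoke the preceding theorem, which guarantees that a $k$-intersection polished graph $G$ has only $O(n^k)$ maximal cliques in total (the $\binom{n}{k}$ cliques of size at least $k$, together with the $O(n^{k-1})$ smaller ones). The number of candidates is therefore polynomially bounded, which is exactly what makes the problem tractable. I would then run the maximal clique enumeration algorithm of \cite{mace}, which in its fast-matrix-multiplication form charges $O(n^{2.376})$ time to each clique it emits, where $2.376$ is the matrix-multiplication exponent. Multiplying the number of maximal cliques by the per-clique cost gives a total running time of $O(n^k)\cdot O(n^{2.376}) = O(n^{k+2.376})$, which is the stated bound.

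The step that deserves the most care is justifying the $O(n^{2.376})$ per-clique charge of \cite{mace}, since this is precisely where the otherwise mysterious exponent $2.376$ enters: the enumeration uses fast (Coppersmith--Winograd) matrix multiplication to test candidate vertices when generating the children of a clique, so the delay between consecutive outputs is $O(n^{2.376})$ rather than $O(n^3)$. The remaining obligations are routine bookkeeping: confirming that maximal cliques of size below $k$ are already absorbed into the $O(n^k)$ count, so that the maximum over all enumerated maximal cliques really is the global maximum clique, and noting that maintaining the running best size adds only $O(n)$ per clique, which is dominated.

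Finally, I would remark that the polished structure permits a sharper self-contained argument avoiding \cite{mace} altogether. The proof of the previous theorem shows each $k$-subset lies in at most one maximal clique; concretely, the unique maximal clique containing a $k$-clique $S$ is $S$ together with its common neighbors, because any two common neighbors $w_1,w_2$ of $S$ satisfy $S\subseteq N[w_1]\cap N[w_2]$, so $|N[w_1]\cap N[w_2]|\ge k$ and the polished condition forces the edge $w_1w_2$. Each such $S\cup C(S)$ is formed in $O(nk)$ time, so iterating over the $O(n^k)$ candidate $k$-subsets yields all maximal cliques of size at least $k$ in $O(n^{k+1})$ time, already establishing the corollary with room to spare; I would nonetheless keep the $O(n^{k+2.376})$ statement, since it is the form matching the cited enumeration machinery.
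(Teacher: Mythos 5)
Your proposal is correct and takes essentially the same route as the paper: the paper's proof likewise just combines MACE's $O(Mn^{2.376})$ enumeration time with the preceding theorem's bound $M = O(n^{k})$ on the number of maximal cliques to obtain $O(n^{k+2.376})$. Your closing self-contained argument (each $k$-clique $S$ extends uniquely to the maximal clique consisting of $S$ and its common neighbors, giving an $O(n^{k+1})$ bound) is sound and sharper, but it is an optional strengthening beyond what the paper does rather than a different primary approach.
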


\proof
The algorithm MACE\cite{mace} enumerates all maximal cliques in time 
 $O(Mn^{2.376})$, where $M$ is the number of maximal cliques in $G$.
This together with the above theorem leads to finding a maximum
 clique in $O(n^{k+2.376})$ time.
\qed

However, we can observe that some polished graphs have a huge
 number of maximal cliques for large $k$.
The following property demonstrates that there are some graphs 
 having exponentially many maximal cliques.

\begin{theorem}
There is a series of infinitely many $k$-intersection polished graphs that
 have exponentially many maximal cliques.
In particular, a graph of $n^2/4 + 3n/2$ vertices has at least $2^{n/2}$
 maximal cliques for some $k$.
\end{theorem}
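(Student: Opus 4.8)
The plan is to start from the smallest standard graph already known to have exponentially many maximal cliques and then doctor it into a polished graph by the same gadget idea used in the non-convergence proof. The natural core is the complement of a perfect matching on $n$ vertices (the cocktail-party graph): group the $n$ vertices into $n/2$ pairs $\{a_i,b_i\}$, make every within-pair couple a non-edge and every cross-pair couple an edge. A clique may contain at most one vertex of each pair, so the maximal cliques are exactly the transversals that pick one of $a_i,b_i$ for each $i$; there are $2^{n/2}$ of them, which is already the bound we want. Since the statement only asks for a lower bound, I would not count exactly: it suffices to exhibit these $2^{n/2}$ transversals and argue they remain distinct maximal cliques in the final graph.

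The obstruction is that this core is not $k$-intersection polished: in it, both a cross-pair edge and a within-pair non-edge have the same closed common neighborhood of size $n-2$, so no threshold $k$ separates edges from non-edges, and one step of polishing would either complete or empty the graph. First I would repair this with private gadgets exactly as in Figure~\ref{non-converge}: attach fresh cliques that raise $|N[u]\cap N[v]|$ for the cross-pair edges past the chosen threshold, while attaching each gadget vertex to at most one endpoint of any single matched pair, so that within-pair non-edges gain no (or strictly fewer) common neighbors and stay below $k$. Setting $k$ just above the non-edge count then pins the edge set: cross-pair couples satisfy the $k$-common neighbor condition and matched pairs do not, so $G=P^k(G)$. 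The gadget vertices are chosen so that, together with the $n$ core vertices, the total is $n^2/4+3n/2$; this fixes the gadget budget at $n^2/4+n/2$ extra vertices.

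The verification then splits into routine case checks on $P^k(G)=G$: every cross-pair edge keeps at least $k$ common neighbors; every matched non-edge keeps fewer than $k$; every edge internal to a gadget clique is self-supported by the rest of that clique; and every non-edge with at least one gadget endpoint has too few common neighbors to be turned into an edge, since gadget vertices share almost nothing with vertices outside their own clique. Finally, because each gadget is private and no gadget vertex is adjacent to both members of a pair, the $2^{n/2}$ transversals extend to $2^{n/2}$ distinct maximal cliques, so the exponential lower bound survives.

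The hard part will be the common-neighbor accounting, since it must do two opposing things at once: push all genuine edges to at least $k$ and simultaneously hold every matched non-edge strictly below $k$. If a single matched pair were inadvertently pushed to $k$, polishing would insert that edge, two transversal cliques would merge, and both the polished property and the exponential count would collapse; symmetrically, under-supporting a cross-pair edge would delete it and destroy a transversal. Designing the gadgets so that this separation holds for every pair while the vertex total lands exactly on $n^2/4+3n/2$ is the one place where the bookkeeping is genuinely delicate; everything else reduces to the clique-counting argument on transversals.
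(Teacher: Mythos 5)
You start from the same core as the paper --- the complement of a perfect matching, whose $2^{n/2}$ transversals are its maximal cliques --- and you correctly identify the obstruction: in that graph every pair, edge or non-edge, has exactly $n-2$ common closed neighbors, so no threshold separates them. The gap is in the repair step. Because matched non-edges keep exactly $n-2$ common neighbors (your gadgets, rightly, never see both endpoints of a pair), the threshold is forced to satisfy $k \ge n-1$. But then a per-edge private gadget, built ``exactly as in Figure~\ref{non-converge}'', survives polishing only if it is itself a clique of size about $k$: a gadget vertex $w$ whose closed neighborhood is a small set $Q \cup \{u,v\}$ has $|N[w]\cap N[x]| \le |Q|+2$ for every $x$, so unless $|Q| \ge k-2 \approx n$, all edges incident to $w$ are deleted and the graph is not polished. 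With $\Theta(n^2)$ core edges each carrying its own clique of size $\Theta(n)$, you need $\Theta(n^3)$ vertices, while the statement allows only $n^2/4 + n/2$ extra ones. This is not delicate bookkeeping to be fixed later: no assignment of private per-edge gadgets can satisfy both polishedness and the vertex count, and your proposal defers exactly this point.

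The missing idea is to share gadgets across many edges via a small edge cover by transversals. The paper takes the $n/2+1$ sets $U_0 = \{2,4,\dots,n\}$ and $U_i = (\{1,3,\dots,n-1\}\setminus\{2i-1\})\cup\{2i\}$ for $1 \le i \le n/2$. Each $U_j$ is itself a transversal (hence contains no matched pair), and together these sets contain every edge of the core. Attaching one fresh clique $W_j$ of $n/2$ new vertices to each $U_j$ then makes every core edge gain at least $n/2$ common neighbors (reaching $n-2+n/2 \ge n$), leaves every matched non-edge at $n-2$, and supports every gadget-incident pair inside its own clique $U_j \cup W_j$ of size $n$; the result is $n$-intersection polished, has exactly $n + (n/2+1)(n/2) = n^2/4 + 3n/2$ vertices, and still has at least $2^{n/2}$ maximal cliques (the transversals other than the $U_j$, plus the $n/2+1$ gadget cliques that absorb them). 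Your constraint that no gadget vertex be adjacent to both members of a pair is precisely the transversal condition, so you had the right invariant in hand; what is missing is the covering construction that realizes it within the stated vertex budget, and that construction is the actual content of the proof.
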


\proof
Let $G$ be a graph of vertex set $V = \{1,\ldots,n\}$ such that 
 any vertex pair except for $2i-1$ and $2i$ for $i=1,\ldots,n/2-1$
 is connected by an edge.
$G$ is the removal of a perfect matching from a complete graph.
We assume that $n$ is a multiplier of 2.
The graph has exactly $2^{n/2}$ maximal cliques, since any choice
 of one vertex from each non-edge forms a maximal clique.
Any $u$ and $v$ satisfies $|N[u]\cap N[v]| = n-2$ in $G$.
We add vertices and edges to $G$ to obtain a graph $G'$ satisfying 
 the statement.

Let $V_1 = \{1,3,5,\ldots,n-1\}$ and $V_2 = \{2,4,6,\ldots,n\}$ be
 a partition of $V$ into two subsets of equal size.
We then consider sets of vertices $U_0 = V_2$, and $U_i = V_1\setminus
 \{2i-1\} \cup \{2i\}$, for $1\le i\le n/2$.
We can see that any edge in $G$ is included in at least one of $U_j, 
 0\le j\le n/2$, and no non-edge is included in some $U_j$.
For each $U_j$, we add $n/2$ vertices and connect edges so that 
 $U_j$ and the added vertices form a clique.
The obtained graph is denoted by $G' = (V', E')$.
We have $|V'| = n + (n/2+1) \times n/2 = n^2/4 + 3n/2$.
Observe that by the addition, a pair of vertices increases their common
 neighbors by $n/2$ if and only if they are connected by an edge.
Thus, the graph is an $n$-intersection polished graph.
We can see that any maximal clique in $G$ other than $U_j$ is a maximal 
 clique in $G'$, thus we still have $2^{n/2}$ maximal cliques.
Therefore, the statement holds.
\qed

For real world data, $k$-intersection polishing generates acceptable
 clusters in some cases such as when the pseudo cliques are disjoint.
However, it often does not work well if there are many vertices adjacent
 to several large degree vertices (hubs), since these vertices form
 a very large clique even though they are not related.
This property is often observed in many kinds of networks such as
 social networks, and Web link networks.
The cause is that two vertices are connected, even though many of their
 neighbors are not common.
In such cases, instead of $k$-intersection model, similarity measures
 are preferable such as the Jaccard coefficient.
By using similarity, vertices adjacent to many different vertices
 will not be connected.

We denote the similarity of $N[u]$ and $N[v]$ by $sim(u,v)$, where
 $N[u]$ and $N[v]$ are more similar when $sim(u,v)$ is large.
For example, if the similarity measure is the Jaccard coefficient, 
 $sim(u,v) = |N[u]\cup N[v]| / |N[u]\cap N[v]|$.
We define a graph $P^{sim(\theta)}(G)$ in the same manner, i.e.,
 $P^{sim(\theta)}(G) = (V, \{(u,v)| sim(u,v)\ge \theta\})$.
The operation of computing $P^{sim(\theta)}(G)$ is called $\theta-sim$
 polishing, and a graph is called $\theta-sim$ polished if
  $G=P^{sim(\theta)}(G)$.
In particular, if the similarity measure is the Jaccard coefficient, 
 they are $\theta$-Jaccard polishing and $\theta$-Jaccard polished.
Unfortunately, even in these cases polishing may not converge for
 some graphs.

\begin{theorem}
For every\/ $\theta \in (0,1/4] \cup (3/11,2/7]$, there is a graph\/ $G = (V,E)$ such that\/
$P^{\jsim(\theta)}(G) \neq G$ and\/ $P^{\jsim(\theta)}(P^{\jsim(\theta)}(G)) = G$.
\end{theorem}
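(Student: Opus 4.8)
The plan is to exhibit, for each admissible $\theta$, a single labeled graph $G$ whose one‑step polished image is a relabeled copy of itself. The key observation is that $\theta$-sim polishing is equivariant under vertex relabeling: since the Jaccard coefficient of $N[u]$ and $N[v]$ is preserved by any permutation $\sigma$ of $V$, we have $P^{\jsim(\theta)}(\sigma(G)) = \sigma(P^{\jsim(\theta)}(G))$. Consequently it suffices to find $G$ together with an \emph{involution} $\phi$ of $V$ (so $\phi^2 = \mathrm{id}$) satisfying $P^{\jsim(\theta)}(G) = \phi(G)$ and $\phi(G) \neq G$; then $P^{\jsim(\theta)}(P^{\jsim(\theta)}(G)) = P^{\jsim(\theta)}(\phi(G)) = \phi(P^{\jsim(\theta)}(G)) = \phi(\phi(G)) = G$, which is exactly the claimed period‑two behavior. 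This refines the $k$-intersection non‑convergence example (Figure~\ref{non-converge}), where the polished graph was only shown to be isomorphic to the original; here the extra requirement is that the relabeling be an involution, so that the second polishing step returns precisely to $G$ rather than to another isomorphic graph.

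The concrete construction I would use is a small ``active core'' of distinguished vertices, on which $\phi$ acts by swapping an \emph{edge side} with a \emph{non-edge side}, embedded in a rigid scaffold of private attached cliques whose sole purpose is to pin down the closed‑neighborhood sizes (the same device as in Figure~\ref{non-converge}). By giving each toggling pair $(u,v)$ a prescribed number of common neighbors through its own attached cliques, I can make $|N[u]\cap N[v]|$ and $|N[u]\cup N[v]|$, and hence $\jsim(u,v)$, explicit functions of a few integer scaffold parameters. The verification then reduces to four families of inequalities: in $G$, each active edge must satisfy $\jsim(u,v) < \theta$ (so it is deleted) and each active non-edge must satisfy $\jsim(u,v) \ge \theta$ (so it is added), while every scaffold pair must lie strictly on the correct side of $\theta$ so that the scaffold is invariant. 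Because $\phi$ interchanges the two active sides, the inequalities needed for $P^{\jsim(\theta)}(G)=\phi(G)$ are automatically symmetric, and solving them for the integer parameters determines which $\theta$ are realizable; I expect the two intervals $(0,1/4]$ and $(3/11,2/7]$ to emerge as two distinct feasible parameter regimes (two gadget sizes).

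The step I expect to be the main obstacle is controlling the feedback intrinsic to the Jaccard measure: since $\jsim(u,v)$ is computed from the \emph{closed} neighborhoods $N[u]$ and $N[v]$, the active edges and non-edges that toggle also change the neighborhood counts, so a single polishing step shifts the similarity of every active pair at once. Arranging for all active pairs to cross $\theta$ in the right direction simultaneously, while keeping every scaffold pair bounded strictly away from $\theta$ in \emph{both} $G$ and $\phi(G)$, forces a tight system of rational inequalities; it is precisely the integrality of the neighborhood sizes that makes only discrete windows of $\theta$ feasible, which is what accounts for the excluded gap $(1/4,3/11]$ in the statement. Once suitable integer parameters are fixed, the remaining work—tabulating the finitely many Jaccard values in $G$ and $\phi(G)$ and confirming the toggle—is routine bookkeeping that I would organize by grouping the vertices into orbits of $\phi$, so that each inequality need be checked only once per orbit.
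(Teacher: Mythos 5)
Your structural reduction is sound, and it is in fact exactly the mechanism behind the paper's proof: the paper builds a graph $G_{m_1,m_2,n}$ (a vertex $a$, bicliques $\{a\}\times B$ and $B\times C$, and private cliques $D_{i,j}$ attached to each pair $(b_i,c_j)$), which is precisely an ``involution gadget'' in your sense --- the map fixing $a$, swapping $b_i \leftrightarrow c_i$ and $D_{i,j} \leftrightarrow D_{j,i}$ is an involution $\phi$ with $P^{\jsim(\theta)}(G)=\phi(G)\neq G$ on the feasible window, and your equivariance observation is the clean justification of the paper's implicit step that ``the polished graph is symmetric to $G$'' implies the second polishing restores $G$. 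The verification also proceeds as you predict: one tabulates all Jaccard values and requires $f<\theta\le g$, where $f$ is the largest similarity among pairs that must end up separated and $g$ the smallest among pairs that must end up joined.

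The genuine gap is in how you expect to cover the stated set of thresholds. Each concrete gadget realizes only a single half-open window $(f,g]$ with $f>0$: since $\phi$ is a bijection, $\phi(G)$ has the same number of edges as $G$, so $P^{\jsim(\theta)}(G)=\phi(G)\neq G$ forces at least one edge to be deleted, and a deleted edge $(u,v)$ satisfies $\jsim(u,v)\ge 2/|N[u]\cup N[v]|>0$, so $\theta$ must exceed a fixed positive rational determined by the gadget. (Equivalently: for any fixed finite graph, once $\theta$ is below all of its positive Jaccard values, polishing can only add edges, and period-two behavior is impossible.) Hence finitely many gadgets --- in particular your ``two gadget sizes'' --- cover only a union of finitely many windows whose infimum is strictly positive, and can never realize every $\theta\in(0,1/4]$, a set whose infimum is $0$. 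What your plan is missing is an infinite parametrized family whose feasibility windows overlap and accumulate at $0$; this is the real quantitative content of the paper's proof, which exhibits the family $G_{m,m,3}$ and proves $g(m,m,3)>g(m+1,m+1,3)>f(m,m,3)>f(m+1,m+1,3)$ for all $m\ge 3$ together with $\lim_{m\to\infty}g(m,m,3)=0$, plus a few sporadic parameter choices (e.g.\ $(m_1,m_2,n)=(1,2,2)$ yields exactly the window $(3/11,2/7]$, and $(2,2,2)$, $(2,3,2)$, $(2,2,3)$, $(3,3,3)$ chain down from $1/4$). Until you recognize that an accumulating family is necessary and establish these overlap and limit facts, the argument cannot close.
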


\begin{proof}
We define a graph $G_{m_1,m_2,n} = (V,E)$ as follows.
Let
\[ V = \{a\} \cup B \cup C \cup \bigcup_{i,j \in \{1,\dots,n\}} D_{i,j}
\]
where $B=\{b_1,\dots,b_n\}$, $C=\{c_1,\dots,c_n\}$, $|D_{i,i}|=m_1$,
 and $|D_{i,j}|=m_2$ for $i \neq j$.
The edge set $E$ is defined so that $\{a\}\times B$ and $B\times C$ form bicliques and $D_{i,j} \cup \{b_i,c_j\}$ forms a clique for each $i,j$.
No other vertices are connected.
Then, for each $i,j \in \{1,\dots, n\}$ and $d \in D_{i,j}$,
\begin{align*}
N[a] &= \{a\} \cup B 
	& |N[a]| &= n+1,
\\
N[b_i] &= \{a,b_i\} \cup C \cup \bigcup_{j} D_{i,j},
	& |N[b_i]| &=  m_1 + (n-1)m_2+n +2,
\\
N[c_j] &= \{c_j\} \cup B \cup \bigcup_{i} D_{i,j},
	& |N[c_j]| &=  m_1 + (n-1)m_2 +n + 1,
\\
	N[d] & = D_{i,j} \cup \{b_i,c_j\},  & |N[d]| &= m+2,
\end{align*}
where $m = m_1$ if $d \in D_{i,i}$ and $m =m_2$ otherwise.
Thus, for $i' \neq i$, $j' \neq j$, $d' \in D_{i,j} \setminus\{d\}$ and $d'' \notin \{a\} \cup B \cup C \cup D_{i,j}$,
\begin{align*}
N[a] \cap N[b_i] &= \{a,b_i\},
&
\jsim(a,b) &= \frac{2}{2n+m_1 + (n-1)m_2+1},
\\
N[b_i] \cap N[b_{i'}] &= \{a\} \cup C,
&
\jsim(b_i,b_{i'}) &=  \frac{n+1}{2m_1 + 2(n-1)m_2+n +3},
\\
N[c_j] \cap N[c_{j'}] &= B,
&
\jsim(c_j,c_{j'}) &=  \frac{n}{2m_1 + 2(n-1)m_2 +n + 2}, 
\\
N[a] \cap N[c_j] &= B,
&
\jsim(a,c_j) &= \frac{n}{ m_1 + (n-1)m_2 + n+2},
\\
N[b_i] \cap N[c_j] &= D_{i,j} \cup \{b_i,c_j\},
&
\jsim(b_i,c_j) &= \frac{m+2}{2m_1 + 2(n-1)m_2 -m+2n + 1},
\\
N[b_i] \cap N[d] &= D_{i,j} \cup \{b_i,c_j\},
&
\jsim(b_i,d) &= \frac{m+2}{ m_1 + (n-1)m_2+n +2},
\\
N[c_j] \cap N[d] &= D_{i,j} \cup \{b_i,c_j\},
&
\jsim(c_j,d) &= \frac{m+2}{ m_1 + (n-1)m_2+n +1},
\\
N[d] \cap N[d'] & = D_{i,j} \cup \{b_i,c_j\} 
&
\jsim(d,d') & = 1,
\\
N[d] \cap N[d''] &\le 1
&
\jsim(d,d'') & \le \frac{1}{\min\{m_1,m_2\}+m_2+3} 
\end{align*}
and $\jsim(u,v)=0$ for any other pairs of distinct $u$ and $v$.

Now we discuss when the graph $ P^{\jsim(\theta)}(G_{m_1,m_2,n}) $ will be symmetric to $G$ in the sense that edges between $\{a\}$ and $B$ have been removed and $\{a\} \times C$ have become a biclique while the other edges are preserved.
This occurs when $f(m_1,m_2,n) < \theta \le g(m_1,m_2,n)$ for
\[
	f({m_1,m_2,n}) = \max\left\{ \jsim(a,b_i),\jsim(b_i,b_{i'}),\jsim(c_j,c_{j'}),\jsim(d,d'') \right\}
\]
 and
\[
	g({m_1,m_2,n}) = \min\left\{ \jsim(a,c_j),\jsim(b_i,c_j),\jsim(b_i,d),\jsim(c_j,d),\jsim(d,d')\right\}\,.
\]
We have
\[\begin{array}{|c|c|c|c|c|}\hline
	m_1	&	m_2	&	n	&	f(m_1,m_2,n)	&	g(m_1,m_2,n)
\\ \hline
	1	&	2	&	2	&	3/11	&	2/7
\\ \hline
	2	&	2	&	2	&	3/13 	&	1/4 
\\ \hline
	2	&	2	&	3	&	2/9	 &	4/17 (> 3/13) 
\\ \hline
	2	&	3	&	2	&	1/5	 &	2/9 
\\ \hline
	3	&	3	&	3	&	1/6	&	3/14 (> 1/5) 
\\ \hline
\end{array}\]
and for all $m \ge 3$, 
\[
	g(m,m,3) > g(m+1,m+1,3) > f(m,m,3) > f(m+1,m+1,3)
\,.\]
Note that $\lim_{m \to \infty} g(m,m,3)=0$.
Therefore, for every $\theta \in (0,1/4] \cup (3/11,2/7]$, there is a graph $G$  satisfying the property of the theorem. \qed
\end{proof}

Finally, we describe our micro-clustering algorithm and 
 data polishing algorithm.
Here $\tau$ is the number of the maximum repetitions.

\begin{tabbing}
{\bf Algorithm} Micro-Clustering\_Similarity ($D$:database, $\theta',
 \theta$:thresholds for record\\ similarity and neighbor similarity)\\
1. construct the similarity graph $G_{sim'}(D)$ with threshold $\theta'$\\
2. $G' =$ GraphPolishing\_for\_Cluster ($G_{sim'}(D)$, $\theta$)\\
3. {\bf output} all maximal cliques of $G'$
\end{tabbing}

\begin{tabbing}
{\bf Algorithm} GraphPolishing\_for\_Cluster ($G=(V,E)$:graph,
 $\theta$:threshold)\\
1. {\bf for} $i := 1$ to $\tau$\\
2. \ \ \ $E' := \{ (u,v) | sim(u,v) \ge \theta \}$\\
3. \ \ \ {\bf if} $E' = E$ {\bf then break}\\
4. {\bf end for}\\
5. {\bf output} $G' = (V,E')$
\end{tabbing}

The maximal clique enumeration is done by algorithms such as
 MACE\cite{mace}\footnote{http://research.nii.ac.jp/\~{}uno/codes.html},
 and Tomita's algorithm
 \cite{tomita}\footnote{http://research.nii.ac.jp/\~{}uno/codes.html}.
For the computation of $P^{sim(\theta)}(G)$ and similarity graph for
 set families (transaction databases), the algorithm described in the 
 next section is efficient\cite{DS04}.

\section{Algorithm for Fast Data Polishing}

A straightforward way to compute $sim(u,v)$ for all pairs of $u$ and $v$ 
 takes at least $O(|V|^2)$ time.
This is very heavy when $|V|$ is large, such as one million.
We can not avoid this difficulty when the similarity graph or the
 polished graph is dense, and thus has $\Theta(|V|^2)$ edges.
In fact, this implies that elements are similar to many others, thus
 there are only few clusters.
They can be tractable by usual clustering algorithms such as $k$-means.
Micro-clustering aims to find many small clusters, thus such dense graphs
 are not interesting, and we assume that polished graphs are sparse.
However, Even with this assumption, computation less than
 square time is non-trivial.
For efficient computation, we observe the following.\\

\noindent
{\bf Observation}: 
In many similarity measures for neighbors, $sim(u,v) \ge \theta$ only when 
 $|N[u]\cap N[v]| > 0$\\

If the similarity graph is sparse, the intersection size $|N[u]\cap N[v]|$
 is zero for almost all pairs of vertices, and only few pairs satisfy
  $|N[u]\cap N[v]| > 0$.
Moreover, many similarity measure for sets can be computed quickly
 with the use of $|N[u]\cap N[v]|$, since $|N[u]\cup N[v]| =
  |N[u]| + |N[v]| - |N[u]\cap N[v]|$, $|N[u]\setminus N[v]| =
  |N[u]| - |N[u]\cap N[v]|$, and so on.
Therefore, we describe an algorithm for computing the intersection
 of closed neighbors, for each pair having non-empty intersection.
The algorithm is actually a kind of folklore and often used in
 literature\cite{DS04}.
The algorithm can be seen as a variant of the transposition of
 a matrix represented in a sparse manner.
We additionally show a bound for computation time when the degree
 sequence of the graph satisfies the Zipf's law.
To the best of our knowledge, this is the first theoretical result that 
 explains the reason this folklore algorithm is fast in practice.
We begin with the following property.

\begin{property}
$N[u]$ has non-empty intersection with $N[v]$ 
 if and only if $u$ and $v$ are both in $N[w]$ for some $w$. \qed
\end{property}

From this property, we can see the following property.

\begin{property}
$|N[u]\cap N[v]|$ is equal to the number of vertices $w\in N[u]$ such that 
 $v\in N[w]$. \qed
\end{property}

We can see that $|N[u]\cap N[v]|$ for all $v$ can be computed by scanning
 $N[w]$ for all $w\in N[u]$, i.e., we scan $N[w]$ one by one.
For each $v\in N[w]$, we increase the counter for $v$, then the counter
 will be the intersection size after all scans.
This idea is described as the following algorithm.

\begin{tabbing}
{\bf for each} $w\in N[u]$ {\bf do}\\
\ \ {\bf for each} $v\in N[w]$ s.t. $v < u$, $intersection[v] := intersection[v]+1$\\
{\bf end for}
\end{tabbing}


In the algorithm, we want to output non-zero intersection size.
For the sake, we use a list $L$ that keeps the vertices having non-empty
 intersection with $u$.
A vertex $v$ is inserted to $L$ when $intersection[v]$ is changed from
 $0$ to $1$.
The following algorithm computes the intersection size for all pairs 
 by executing the above algorithm for all $u\in V$.

\begin{tabbing}
{\bf Algorithm} NeighborIntersection ($G=(V,E)$)\\
1. \ \= $intersection[u] := 0$ for each $u\in V$\\
2. \>{\bf for each} $u\in V$ {\bf do}\\
3. \>\ \ $L := \emptyset$\\
4. \>\ \ {\bf for each} $w\in N[u]$ {\bf do}\\
5. \>\ \ \ \ {\bf for each} $v\in N[w]$ s.t. $v < u$ {\bf do}\\
6. \>\ \ \ \ \ \ {\bf if} $intersection[v] = 0$ {\bf then} insert $v$ to $L$\\
7. \>\ \ \ \ \ \ $intersection[v] := intersection[v]+1$\\
8. \>\ \ \ \ {\bf end for}\\
9. \>\ \ {\bf end for}\\
10.\>\ \ {\bf for each} $v\in L$ {\bf output} ``$\{v,u\} intersection[v]$'',
 $intersection[v] := 0$\\
11.\> {\bf end for}
\end{tabbing}

Note that this algorithm also works for set families (transaction
 databases) where a set family (transaction database) $F$ is a
 collection of subsets of a set (an itemset) $E$.
The computation time of the algorithm is $O(\sum_{u\in V}
 \sum_{w\in N[u]} |\{ v\in N[w]\ |\ v<u\}|)$.
The term ``$\sum_{u\in V} \sum_{w\in N[u]}$'' means that ``for all pairs of
 $u$ and its closed neighbors $w$'', thus is equivalent to
 $\sum_{w\in V} \sum_{u\in N[w]}$.
Therefore, the computation time is $\sum_{w\in V} \sum_{u\in N[w]}
 |\{ v\in N[w]\ |\ v>u\}| = O(\sum_{w\in V} |N[w]|^2)$.

Because we want to understand the practical performance of this 
 algorithm, we consider about the assumption of Zipf's law.
It is known that real world data often satisfies Zipf's law, 
 for example scale free graphs.
We assume that the degree sequence of the similarity graph is within
 a Zipf's law, i.e., the expected value of the degree of the $i$th
 vertex is $\alpha / i^\Delta$ for some constant $\alpha$ and $\Delta$.
For stating a time complexity, we consider the case in which for some
 constants $c$ and $\beta$, any vertex $w$ satisfies
 $|N[w]| \le \mbox{max}\{ c \alpha / w^\Delta, \beta\}$.
Since the time complexity does not change even if few vertices violate the 
 condition, this assumption would be acceptable.
We have the following by Zipf's law.

\begin{eqnarray*}
 \sum_{w\in V} |N[w]|^2 &\le& \sum_{w\in V} \mbox{max}\{ c \alpha / w^\Delta, \beta\}^2\\
 &\le& \sum_{w\in V} (c \alpha / w^\Delta)^2 + \sum_{w\in V} \beta^2\\
 &\le& \alpha^2 \sum_{w\in V} (1 / w^{2\Delta}) + O(|V|)
\end{eqnarray*}

Thus, the computation time is bounded by $O(\alpha^2\log |V|)$ for
 $\Delta=1/2$, and $O(\alpha^2)$ for $\Delta>1/2$.
For example, if $\alpha$, which is intuitively the number of
 appearances of the most frequently appearing items, is $O(|V|^{1/2})$, 
 the algorithm terminates in $O(|V|\log |V|)$ time if $\Delta=1/2$, and 
 $O(|V|)$ time if $\Delta>1/2$.

\begin{theorem}
For a given similarity graph $G=(V,E)$, algorithm NeighborIntersection
 terminates in $O(\alpha^2\log |V|)$ time for $\Delta=1/2$, and $O(\alpha^2)$
 time for $\Delta>1/2$, if the degree of the $i$th vertex is at most
 $\max\{ c \alpha / i^\Delta, \beta\}$ time, where $c, \beta$, and $\Delta$ 
  are constant numbers.
\end{theorem}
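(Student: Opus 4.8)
The plan is to bound the total running time of algorithm NeighborIntersection by counting its elementary operations, and then to control that count using the Zipf degree bound together with a standard $p$-series estimate. The dominant cost lies in the innermost body (lines 6--7), which performs one counter update per triple $(u,w,v)$ with $w\in N[u]$, $v\in N[w]$, and $v<u$; the initialization (line 1), the list bookkeeping, and the output/reset pass (line 10) each touch a vertex only a constant number of times beyond these updates, so they contribute $O(|V|)$ plus a term no larger than the number of updates. First I would therefore fix $\sum_{u\in V}\sum_{w\in N[u]}|\{v\in N[w] : v<u\}|$ as the quantity to estimate.

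Next I would reduce this to $\sum_{w\in V}|N[w]|^2$. Because the graph is undirected and $N[\cdot]$ is the closed neighborhood, the relation ``$w\in N[u]$'' is symmetric, so $\sum_{u\in V}\sum_{w\in N[u]}$ ranges over exactly the same pairs as $\sum_{w\in V}\sum_{u\in N[w]}$. Swapping to put $w$ outermost, each fixed $w$ contributes $\sum_{u\in N[w]}|\{v\in N[w] : v>u\}| = \binom{|N[w]|}{2}\le |N[w]|^2$, since for fixed $w$ both $u$ and $v$ range over $N[w]$. Summing over $w$ gives a total update count of $O\!\left(\sum_{w\in V}|N[w]|^2\right)$, matching the expression already derived in the text.

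Then I would substitute the hypothesis $|N[w]|\le \max\{c\alpha/w^\Delta,\beta\}$. Using $\max\{x,y\}^2\le x^2+y^2$ splits the bound into $\sum_{w\in V}(c\alpha/w^\Delta)^2$ and $\sum_{w\in V}\beta^2$; the latter is $\beta^2|V|=O(|V|)$, and the former is $c^2\alpha^2\sum_{w=1}^{|V|}w^{-2\Delta}$ with the constant $c^2$ absorbed into the $O(\cdot)$. The series $\sum_{w=1}^{|V|}w^{-2\Delta}$ is the case distinction that drives the two bounds: for $\Delta=1/2$ it is the harmonic sum $H_{|V|}=O(\log|V|)$, while for $\Delta>1/2$ we have $2\Delta>1$, so $\sum_{w=1}^{\infty}w^{-2\Delta}$ converges and the sum is $O(1)$. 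This yields $O(\alpha^2\log|V|)$ and $O(\alpha^2)$, respectively.

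The only genuinely delicate point — and the step I would flag as the main obstacle — is reconciling the additive $O(|V|)$ residue (from the $\beta$ part and from initialization/output) with the clean $O(\alpha^2\log|V|)$ and $O(\alpha^2)$ claimed in the statement. These agree only when $\alpha^2$ dominates $|V|$, which is precisely the regime the paper highlights in its example ($\alpha=O(|V|^{1/2})$, giving $\alpha^2=\Theta(|V|)$ and bounds $O(|V|\log|V|)$ and $O(|V|)$). I would therefore either state this assumption explicitly or absorb the linear term as a lower-order additive cost; everything else is routine operation counting plus the convergent/harmonic $p$-series estimate.
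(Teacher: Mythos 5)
Your proposal follows essentially the same route as the paper's own derivation: the identical summation swap giving $O\!\left(\sum_{w\in V}|N[w]|^2\right)$, the same splitting of $\max\{c\alpha/w^\Delta,\beta\}^2$ into an $\alpha$-term and a $\beta$-term, and the same harmonic versus convergent $p$-series case distinction for $\Delta=1/2$ and $\Delta>1/2$. Your closing remark about the additive $O(|V|)$ residue is well observed — the paper's own computation also carries a $+\,O(|V|)$ term that is silently dropped in the theorem statement — but this caveat concerns the statement rather than the argument, which matches the paper's.
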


An implementation of this algorithm is available at our Website\\
 ({\tt http://research.nii.ac.jp/\~{}uno/codes.html}).
The name of the implementation is SSPC.

\subsection{Equivalence to frequent itemset mining}

A transaction database is a database composed of records such that 
 each record $T_i$ is a subset of a set of items $I$.
The {\em frequency} of an itemset $P\subseteq V$ is the number of 
 transactions including $P$.
For a threshold $\sigma$, $P$ is called a {\em frequent itemset}
 if its frequency is no less than $\sigma$.
The problem of enumerating all frequent itemsets is a fundamental problem
 in data mining thus has been extensively studied and has many applications.
Consider the transaction database $\{ N[u] | u\in V\}$ where $V$ is 
 regarded as $I$.
Since the number of transactions including itemset $\{ T_u, T_v\}$ is equal to
 the number of vertices $w$ such that $u,v\in N[w]$, 
 the frequency of $\{ T_u, T_v\}$ is equal to $|N[u]\cap N[v]|$.
Thus, we can find all pairs of vertices having non-empty neighbor
 intersection by enumerating all frequent itemsets of size two with $\sigma=1$.
In fact, LCM algorithm\cite{LCM04,DS04} for frequent itemset mining 
 uses an algorithm similar to NeighborIntersection to compute the
 frequencies of itemsets $P\cup \{ e\}$ for all $e$, and it has a
 good practical efficiency for real world large scale databases.

\section{Maximal Clique Enumeration}

Maximal clique enumeration also has long history from
 the 1970's\cite{Ak73,TsIdArSh77}.
In the 2000's, researches began on practical large scale data
 as the growth of data centric science.
The number of maximal cliques is usually tractable even for large
 scale data in such area. 
Recent algorithms such as Makino-Uno algorithm\cite{mace} (MACE)
 and Tomita's\cite{tomita} can enumerate in short time
 in such cases.
The enumeration frameworks of these algorithms are quite different, but 
 both they work well in practice.
As reported in \cite{tomita}, Tomita's algorithm is faster in dense graphs, 
 and MACE is faster in sparse graphs.

MACE is a depth first search type algorithm traversing the maximal 
 clique space.
It starts from the lexicographically maximum clique, and moves to another
 clique recursively by adding a vertex, removing non-adjacent vertices
 to it, and adding addible vertices lexicographically.
Duplication is avoided by introducing a tree shaped transversal route
 induced by parent-child relation between maximal cliques.
Since the algorithm explores no redundant clique, the computation time
 per maximal clique is bounded by polynomial, and the practical
 performance is good.

On the other hand, Tomita's algorithm is a branch and bound type.
It starts from the empty set, and recursively adds vertices
 in a lexicographic order.
At every iteration, it chooses a pivot vertex, and re-orders the
 vertices so that the vertices adjacent to the pivot are located at
 the latter part.
Since any clique composed only of vertices adjacent to the pivot is
 not maximal because we can add the pivot to it, we can omit the
 recursive calls with the addition of these vertices, in this ordering.
This pruning works well especially in dense graphs.
When we want to find only maximal cliques of at least a given threshold size,
 we have one more advantage.
We can delete vertices of degrees smaller than the threshold,
 even in the subgraph given to the recursive calls.
This pruning is not applicable to MACE.

\section{Experiments}

We now explain the results of our computational experiments.
Our implementations were coded in C, without any sophisticated library
 such as binary trees.
All experiments were conducted on a standard PC that are for consumer's use.
Note that none of the implementations used multi-cores.
The codes and the instances are available at our Website
 ({\tt http://research.nii.ac.jp/\~{}uno/codes.html}).
As test instances, we prepared three types of data; randomly generated
 data, business relation among Japanese companies, and Reuters news articles.
As shown below, the results suggest that our data polishing algorithm
 satisfies the requirements for micro-clustering, that are quantity,
 independence, coverage, granularity, and rigidity.

The first instances are randomly generated graphs for evaluating
 how well clustering methods can efficiently find micro-clusters.
First, we initialized a graph with no edge, and randomly put $h$
 original cliques of size 30 on the graph, so that each vertex
 is included in at most $b$ cliques, by re-choosing a vertex when 
 that vertex was already included in $b$ cliques.
We made graphs of 50,000, 100,000, and 350,000 vertices, and
 $h$ is set to 1,000, 3,000, and 10,000 for each case, respectively.
$b$ is multiplicity that the number of micro-clusters one vertex
 can belong to, and in practice it would range from one to five
 or bit more.
For each vertex $v$, we randomly re-connect several edges incident to $v$.
For each edge $(v,u)$, with probability $1-p$, we replace the end point
 $u$ by a randomly selected vertex $u'$.
Note that this operation does not change the degree of $v$ but 
 changes the degrees of $u$ and $u'$.
We tried the graphs of all combinations of $b=1,2,4$,
 $h=1,000, 3,000, 10,000$, and $p=0.3, 0.5, 0.9, 1.0$

 
We evaluated accuracy by F-value-like measure that takes into account
 both precision and recall.
For each original clique $C$, let $\kappa(C)$ be the maximum of
 $|C\cap C'|$ among all clusters $C'$ found by an algorithm, and
 for each $C'$, let $\kappa(C')$ be the maximum of 
 $|C\cap C'|$ among all original cliques.
We can then consider the precision and recall as $\kappa(C) / |C|$ and 
  $\kappa(C') / |C'|$, respectively.
Let $P$ be the average of $\kappa(C) / |C|$ and $R$ be the average of 
 $\kappa(C') / |C'|$ over all original cliques and clusters.
The F-value-like measure is defined by $(P+R) / (2\times P\times R)$.
The reason we choose the best for precision and recall comes 
 from the principle of data mining.
Data mining approaches basically enumerate all candidates.
In this sense, it is preferable if we can find some solutions that
 correspond to hidden structures, even though the number of
 solutions is a bit large.

We examined Metis\cite{Metis}, Girvin-Newman\cite{Newman}, DBscan\cite{DBscan}
   and graph polishing with the Jaccard
 coefficient with thresholds $0.07, 0.15$ and $0.3$.
DBscan inputs sparse distance matrix, thus we set distance of two vertices
 to one if there connected by an edge, and otherwise the distance is 
 not defined.
The parameters ``eps'' is set to 1.5, and ``MinPts'' is set to 30 for
 $b=1$, and to $50$ otherwise.
Actually, the cluster sizes produced by Girvin-Newman were quite biased,
 the accuracy were not good.
The results by Metis and DBscan were also so.
On the contrary, the results by graph polishing were good in many cases.
As the decrease of the similarity threshold, the number of clusters
 and the accuracy increased.
In particular, the accuracy was almost 1 in less noisy cases, 
 while that of Metis was below 0.5.
The number of clusters was not so large, even with threshold 0.07.
The accuracy did not change much when the problem sizes or $b$ increased.
It is interesting that the accuracy of Metis increased in such cases.


The second instance involved business relation data among Japanese
 companies in which the vertices are the companies, and 
 two companies are connected when they trade.
The data is of year 2012 and provided by Teikoku DataBank Limited, Japan.
The numbers of vertices and edges in the graph are 3,282 and 35,168, 
 respectively, and the graph has 32,953 maximal cliques.

We apply our graph polishing algorithm with the PMI
 (pointwise mutual information) of 0.5, 0.6, 0.7, 0.8, 0.9.
The PMI of $A$ and $B$, $A,B\subseteq E$
 is a similarity measure defined by $\log ( |A\cap B|\times |E| /
  |A|\times|B|)$.
The changes in the number of edges and number of maximal cliques
 are summarized in Table \ref{t1}.
The visualization of the graphs are shown in Figure \ref{t2}. 
Many clusters were clarified as cliques, and some 
vertices belonged two or more clusters, while we cannot see anything
 in the original graph.
Examples of clusters are as follows. 

\begin{itemize}
\item Toyota Motor, Suzuki Motor, Yamaha Motor, Daihatsu Motor, Mazda Motor, Isuzu Motor, Nissan Motor, Hino Motor, Fuji Heavy Industries, Honda Motor, Mitsubishi Motor, Sato Shoji, Jidosha Buhin Kogyo
\item Nissan Shatai, Aichi Machine Industry, Calsonic Kansei, Sincerity Passion Kindness, JFE Container, SNT, Zero, Unipres, Kokusai, Gexeed
\item Tsuchiya, Insight, Career Bank, Japan Care Service, Ecomic, JPN Holdings, Accordia Golf
\end{itemize}

\begin{table}
\caption{Results when $b=1$ (acc. = accuracy, \#cls. = \#clusters)}\label{b1}
\small
\vspace{-3mm}
\hspace{-10mm}
\begin{tabular}{|l|rr|rr|rr|rr|rr|rr|}
\hline
 & Polish & 0.07 & Polish & 0.15 & Polish & 0.3 & Metis & & DBscan & & Newman & \\ 
 \#cls., $p$ & acc. & \#cls. & acc. & \#cls. & acc. & \#cls. & acc. & \#cls. & acc. & \#cls. & acc. & \#cls. \\ \hline \hline
$1,000, 0.3$ & 0.993 & 1,601 & 0.798 & 1,552 & 0.229 & 4,891 & 0.072 & 1,396& 0.256 & 710 & 0.329 & 27 \\ \hline
$3,000, 0.3$ & 0.991 & 3,217 & 0.484 & 5,075 & 0.146 & 1,631 & 0.119 & 3,299& 0.190 & 2,404 & 0.158 & 18\\ \hline
$10,000, 0.3$ & 0.993 & 11,088 & 0.574 & 15,970 & 0.153 & 10,169 & 0.305 & 11,431 & 0.203 & 7,868 & 0.140 & 40\\ \hline \hline
$1,000, 0.5$ & 0.999 & 1,155 & 0.998 & 1,026 & 0.619 & 2,772 & 0.065 & 1,396& 0.381 & 732 & 0.312 & 59\\ \hline
$3,000, 0.5$ & 1 & 4,273 & 0.991 & 3,152 & 0.299 & 10,375 & 0.117 & 3,299 & 0.315 & 2,380 & 0.159 & 36\\ \hline
$10,000, 0.5$ & 1 & 15,379 & 0.993 & 10,390 & 0.352 & 37,892 & 0.302 & 11,431 & 0.329 & 7,952 & 0.080 & 87\\ \hline \hline
$1,000, 0.9$ & 1 & 1,173 & 1 & 3,364 & 1 & 2,929 & 0.064 & 1,373 & 0.798 & 785 & 0.451 & 396 \\ \hline
$3,000, 0.9$ & 1 & 4,249 & 1 & 4,300 & 1 & 3,467 & 0.114 & 3,277 & 0.803 & 2,379 & 0.352 & 636 \\ \hline
$10,000, 0.9$ & 1 & 14,585 & 1 & 17,472 & 1 & 13,106 & 0.304 & 11,326 & 0.799 & 7,843 & 0.183 & 1,001 \\ \hline \hline
$1,000, 1.0$ & 1 & 999 & 1 & 999 & 1 & 999 & 0.092 & 1,000 & 1 & 1,000 & 1 & 1,000 \\ \hline
$3,000, 1.0$ & 1 & 2,999 & 1 & 2,999 & 1 & 2,999 & 0.154 & 3,000 & 1 & 3,000 & 1 & 3,000 \\ \hline
$10,000, 1.0$ & 1 & 9,999 & 1 & 9,999 & 1 & 9,999 & 0.634 & 10,000 & 1 & 10,000  & 1 & 10,000 \\ \hline
\end{tabular}
\vspace{-3mm}
\end{table}

\begin{table}
\caption{Results when $b=2$ (acc. = accuracy, \#cls. = \#clusters)}\label{b2}
\small
\vspace{-3mm}
\hspace{-10mm}
\begin{tabular}{|l|rr|rr|rr|rr|rr|rr|}
\hline
 & Polish & 0.07 & Polish & 0.15 & Polish & 0.3 & Metis & & DBscan & & Newman &\\
 \#cls., $p$ & acc. & \#cls. & acc. & \#cls. & acc. & \#cls. & acc. & \#cls. & acc. & \#cls. & acc. & \#cls. \\ \hline \hline
$1,000, 0.3$ & 0.853 & 1,694 & 0.679 & 1,448 & 0.197 & 3,742 & 0.057 & 1,194 & 0.304 & 63 & 0.287 & 17 \\ \hline
$3,000, 0.3$ & 0.518 & 15,868 & 0.463 & 4,252 & 0.145 & 3,871 & 0.111 & 2,888 & 0.269 & 511 & 0.267 & 46\\ \hline
$10,000, 0.3$ & 0.616 & 15,683 & 0.498 & 14,389 & 0.151 & 15,928 & 0.371 & 9,917 & 0.274 & 1,517 & 0.269 & 113\\ \hline \hline
$1,000, 0.5$ & 0.998 & 2,111 & 0.944 & 1,129 & 0.476 & 2,217 & 0.072 & 1,194 & 0.417 & 147 & 0.286 & 17\\ \hline
$3,000, 0.5$ & 0.999 & 4,533 & 0.867 & 3,740 & 0.269 & 6,518 & 0.110 & 2,888 & 0.359 & 697 & 0.291 & 23\\ \hline
$10,000, 0.5$ & 1 & 13,402 & 0.881 & 11,880 & 0.298 & 22,267 & 0.386 & 9,917 & 0.366 & 2,218 & 0.356 & 31\\ \hline \hline
$1,000, 0.9$ & 0.999 & 1,489 & 1 & 4,967 & 0.896 & 5,306 & 0.066 & 1,183 & 0.477 & 270 & 0.451 & 396\\ \hline
$3,000, 0.9$ & 1 & 31,289 & 1 & 8,136 & 0.834 & 9,409 & 0.111 & 2,874 & 0.359 & 785 & 0.161 & 122\\ \hline
$10,000, 0.9$ & 1 & 168,539 & 1 & 26,567 & 0.842 & 33,678 & 0.341 & 9,861 & 0.372 & 2,704 & 0.190 & 195\\ \hline \hline
$1,000, 1.0$ & 0.999 & 1,303 & 0.999 & 1,303 & 0.999 & 1,303 & 0.082 & 781 & 0.459 & 241 & 0.082 & 26\\ \hline
$3,000, 1.0$ & 1 & 3,776 & 1 & 3,776 & 1 & 3,776 & 0.145 & 2,117 & 0.365 & 711 & 0.098 & 23\\ \hline
$10,000, 1.0$ & 1 & 10,758 & 1 & 10,758 & 1 & 10,758 & 0.475 & 7,149 & 0.375 & 2,387 & 0.301 & 43\\ \hline
\end{tabular}
\vspace{-3mm}
\end{table}


\begin{table}
\caption{Results when $b=4$ (acc. = accuracy, \#cls. = \#clusters)}\label{b4}
\small
\vspace{-3mm}
\hspace{-10mm}
\begin{tabular}{|l|rr|rr|rr|rr|rr|rr|}
\hline
 & Polish & 0.07 & Polish & 0.15 & Polish & 0.3 & Metis & & DBscan & & Newman &\\
 \#cls., $p$ & acc. & \#cls. & acc. & \#cls. & acc. & \#cls. & acc. & \#cls. & acc. & \#cls. & acc. & \#cls. \\ \hline \hline
$1,000, 0.3$ & 0.819 & 2,162 & 0.658 & 1,438 & 0.198 & 3,789 & 0.055 & 1,166 & 0.273 & 131 & 0.281 & 20\\ \hline
$3,000, 0.3$ & 0.460 & 23,929 & 0.454 & 4,108 & 0.153 & 4,441 & 0.108 & 2,787 & 0.215 & 692 & 0.236 & 67\\ \hline
$10,000, 0.3$ & 0.543 & 24,965 & 0.483 & 1,4053 & 0.157 & 1,6940 & 0.511 & 9,587 & 0.227 & 2111 & 0.290 & 152\\ \hline \hline
$1,000, 0.5$ & 0.994 & 2,666 & 0.916 & 1,128 & 0.473 & 1,973 & 0.068 & 1,166 & 0.329 & 193 & 0.325 & 16\\ \hline
$3,000, 0.5$ & 0.995 & 5,925 & 0.815 & 3,853 & 0.282 & 5,904 & 0.113 & 2,787 & 0.230 & 690 & 0.271 & 16\\ \hline
$10,000, 0.5$ & 0.997 & 13,653 & 0.832 & 12,449 & 0.304 & 20,301 & 0.532 & 9,587 & 0.247 & 2,201 & 0.377 & 47\\ \hline \hline
$1,000, 0.9$ & 1 & 2453 & 0.997 & 5,761 & 0.866 & 4,830 & 0.067 & 1,156 & 0.363 & 218 & 0.235 & 45\\ \hline
$3,000, 0.9$ & 1 & 14,784 & 0.989 & 10,756 & 0.768 & 9,262 & 0.110 & 2,775 & 0.207 & 592 & 0.172 & 57\\ \hline
$10,000, 0.9$ & 1 & 254,357 & 0.991 & 32,705 & 0.782 & 3,2676 & 0.470 & 9,543 & 0.230 & 2,062 & 0.168 & 118\\ \hline \hline
$1,000, 1.0$ & 1 & 1803 & 0.997 & 1,939 & 0.926 & 1,705 & 0.081 & 754 & 0.371 & 205 & 0.138 & 27\\ \hline
$3,000, 1.0$ & 1 & 5703 & 0.985 & 8,018 & 0.808 & 5,735 & 0.134 & 1,986 & 0.237 & 523 & 0.319 & 37\\ \hline
$10,000, 1.0$ & 1 & 12,343 & 0.988 & 19,293 & 0.826 & 18,565 & 0.681 & 6,739 & 0.260 & 1,838 & 0.251 & 41\\ \hline
\end{tabular}
\vspace{-3mm}
\end{table}

\begin{figure}[t]
  \begin{center}
  \includegraphics[scale=0.45]{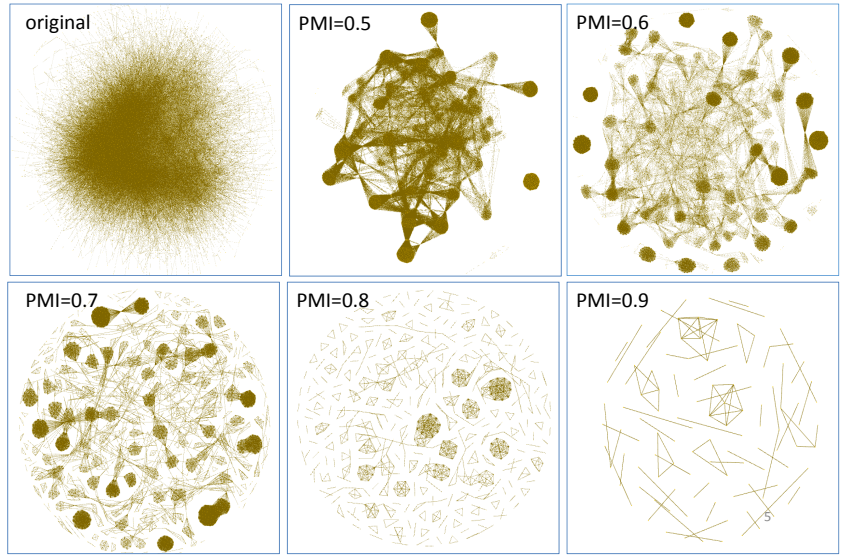}
  \end{center}
  \caption{Visualization of the business relation graph and a polished graph}\label{t2}
\end{figure}

\begin{table}[t]
\begin{center}
\caption{Results on business relation data with different PMI thresholds}\label{t1}
\begin{tabular}{|l|c|c|c|c|c|}
\hline
 PMI & 0.9 & 0.8 & 0.7 & 0.6 & 0.5\\ \hline
 \#edges & 93 & 1,172 & 13,786 & 73,132 & 232,173 \\ \hline
 \#cliques & 59 & 341 & 521 & 343 & 254 \\ \hline
\end{tabular}
\end{center}
\end{table}

The first cluster is of all Japanese car manufacturers, a car parts
 manufacturer, and a trading company of metal materials.
The second is of  car parts manufacturers, trading companies on car parts,
 and those of transportation and containers.
The third is of human resources, land development, and investment.
Four of them are companies from Hokkaido, a prefecture of Japan.
We can see deep relations among the members according to corporate affairs
 and regions.
On the other hand, the clusters seemed to cover the related companies, 
 for example all motor companies are in a cluster.


The third instance is of news articles from Reuters news, a well known
 mass media.
The name of the dataset is RCV1 (English), and the provider is 
 the National Institute of Standards and Technology (NIST)\cite{reuter}.
The articles are from 20/Aug/1996 to 19/Aug/1997, and the number
 of articles is 806791.
Each article was converted to ``bag of words'' which is a set of words
 that the article includes.
Even though two or more identical words appear in an article, we consider
 them as appearing once, thus no two identical words appeared in the bag
 of words of an article.
To remove words appearing quite often, and some special words such as
 ID, we ignored words appearing in at least 1\% of articles, or less 
 than ten articles.

We constructed a similarity graph of the articles so that vertices were
 articles, and two vertices were connected when the Jaccard coefficient
 between corresponding articles was no less than 0.2.
We applied maximal clique enumeration (MACE), Metis, Girvin-Newman, and
 our micro-clustering to the graph and obtained the clusters.
MACE did not terminate in one hour, and produced more than 50
 million maximal cliques. The sizes were relatively large on average,
 likely due to exponentially many cliques included in a large dense
 subgraph.
The number of the clusters in the results of Girvin-Newman, Metis,
 micro-clustering was 25,826, 97,260 and 96,607, respectively.
The sizes of clusters by Girvin-Newman were biased, so that 
 the maximum and second maximum are 240,787 and 75,384, i.e., 
 30\% and 9\% of all the articles.
At the same time there were many small clusters 2 or 3 in size.
The sizes from Metis were not biased; almost all 
 clusters were 7 or 8 in size, and the maximum was 10.
However, it seemed to be too much uniform.
We could find several clusters that would be mixtures of several topics
 such as the following two clusters.

\noindent
TOKYO 1996-11-22 JAPAN: Daimon -96/97 parent forecast.\\
TOKYO 1996-11-22 JAPAN: Daimon - 6mth parent results.\\
TOKYO 1996-08-30 JAPAN: Daimon - 96/97 div forecast.\\
TOKYO 1997-03-28 JAPAN: Daimon -96/97 parent forecast.\\
BRUSSELS, Belgium BELGIUM: EU, Mexico sign cognac and tequila deal.\\
MANILA 1996-10-31 PHILIPPINES: PHILIPPINE STOCKS - factors to watch - Oct 31.\\
BRUSSELS 1996-10-03 BELGIUM: WTO finds against Japan on liquor tax - EU official.\\

\noindent
WASHINGTON 1996-10-21 USA: PRESS DIGEST - Washington Post Business - Oct 22.\\
WASHINGTON 1997-03-27 USA: Farallon group raises Strawbridge stake.\\
OLDWICK, N.J. 1997-04-15 USA: A.M. Best upgrades Exel to A plus from A.\\
LONDON 1997-06-02 UK: EXEL to buy stake in Lloyds managing agency.\\
NEW YORK 1997-06-09 USA: UC'NWIN says appoints Niall Duggan as CEO.\\
PHILADELPHIA 1997-07-14 USA: Strawbridge cuts distribution to holders.\\
AKRON, Ohio 1996-10-01 USA: ABC Dispensing names Crate CFO.\\

In the results of our graph polishing, the cluster sizes range
 from 1 to 292, and most of the clusters were of from 2 to 10 in size.
Not many clusters seemed to be mixtures, as with Metis's.
On the contrary, there were many articles that had the same title,
 or almost the same titles.
Their contexts must be deeply related, thus they should be
 in the same cluster.
We selected some strings composed of several words that were common
 prefixes of many articles, and counted the number of clusters
 including the strings.
The results are summarized in Table \ref{reuter1}. Here (1) means
 that the number of clusters is one, but two clusters of (1) 
  were merged into one.
We can see that Metis tended to partition articles that should 
 have been included in the same cluster.
On the other hand, Girvin-Newman tended to gather non-deeply
 related articles in a cluster.
Our graph polishing did not tend to do both, and seemed
 to obtain good clusters, compared to two popular clustering methods.
In the examples in Table \ref{reuter1}, we showed some clusters that
 would correspond to local areas. 
In such cases, the words used in the article should be categorized
 according to the local areas, such as India, and the result of
 the Girvan-Newman will be good.

\begin{table}[t]
\caption{Number of clusters including specified strings}\label{reuter1}
\begin{tabular}{|l|c|c|c|}
\hline
 & Metis & Girvin-Newman & Polishing \\ \hline
Egyptian pound averages & 18 & 5 & 7 \\ \hline
INDIA: INDIA GOVERNMENT SECURITIES & 37 & 1 & 1 \\ \hline
India Kothari Pioneer MMMF & 4 & 1 & 1 \\ \hline
TAIWAN: Taiwan BSPA & 13 & 1 & 2\\ \hline
USA: High Plains Wheat & 26 & 4 & 4\\ \hline
Eurobonds - Expected new issues -Middle East & 3 & 1 & 1\\ \hline
Eurobonds - Expected new issues -Asia & 1 & 1 & 1\\ \hline
Eurobonds - Expected new issues (-central) & 44 & (1) & 1\\ \hline
Eurobonds - Expected new issues -Latam & 6 & (1) & 1\\ \hline
\end{tabular}
\end{table}

Further, we tried instances taken from twitter, that is of tweets including
 a specified word, and make records of bag of words.
We tried ``lunch'' as the word, and applied clustering algorithms.
Actually the results were not good, and we could see no interesting cluster.
This would be because that the tweets are similar to each other, 
 and topics change smoothly.
For example, there would be records of ``go, restaurant, lunch'', 
``go, tasty, restaurant, lunch'', ``visit, tasty, restaurant, lunch'', 
``visit, tasty, restaurant, lunch, twice'',...
This intuitively implies that the boundaries of the topics are not
 clear, and topics are connected smoothly, in the data of bag of words.
In such data, it seems that we can hardly find good clusters, or 
 there would be no clusters.

\begin{figure}[t]
\vspace{10mm}
  \begin{center}
  \includegraphics[scale=0.45]{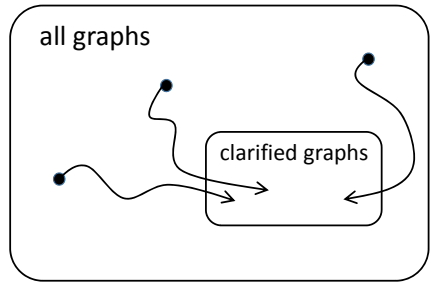}
  \end{center}
  \caption{Unpolished data converted to polished data by data polishing}\label{polishing}
\end{figure}

\section{Discussion}

Data polishing can be seen as to find polished data from given
 un-polished data (see Figure \ref{polishing}).
In the case of graph, we use a feasible hypothesis based algorithm for
 the task.
However, we think another approach, for example minimizing the difference
 between the original graph and polishing graph.
The approach has a theoretical advantage for model explanation, but 
 the minimization problem is a hard optimization requiring long
 computation time.
Although feasible hypothesis approaches cannot guarantee regarding
 approximation, it ensures that the obtained polished
 graph is generated with natural modifications, thus has less
 information loss.

Consider a graph made from a clique by subdividing each edge, where
 a subdivision replaces an edge by a path of two edges.
In some models and data this graph should be recognized as a cluster,
 but our graph polishing deletes all edges, since any vertex pair has
 at most one common neighbor.
On the other hand, consider a chain of cliques that overlap
 neighboring cliques.
Graph polishing changes this graph to a clique if the size of each
 overlap is sufficiently large.
In a usual sense, the cliques should be split if the chain is too
 long.
As we discussed, data polishing is not ideal for taking into account
 global structures.
At this point, we should consider that the clusters provided from graph 
 polishing are still seeds or unification, but are better compared
 to those from existing algorithms.

We can apply clustering algorithms to polished graphs.
In our experiments in another research\cite{nakapara}, the clusters
 of features obtained by these algorithms on polished graphs
 increased the accuracy of a machine learning task.
Polished graphs are not only for clustering, thus there are
 other possible application uses.

Similarity graph construction is a key issue to our graph polishing.
The computational cost is usually large, and sometimes we have 
 no approach to improve efficiency.
In such cases, we can consider the use of approximation.
There are several approaches to find similar elements from the data
 approximately and quickly.
The obtained similarity graph is different from the exact one, but
 feasible hypothesis should also hold in the approximate graph.
This should hold when data and clusters are larger, since 
 the members should have many common neighbors.

\section{Conclusion}

We discussed requirements for the enumeration of  many relatively
 small clusters of elements that are deeply related to each other, 
 that we call micro-clustering.
We proposed a new concept of data processing called ``data polishing'',
 and formulated the clustering problem by data polishing and clique
 enumeration.
Data polishing clarifies the local structures and groups by
 actively modifying the data according to some simple feasible hypothesis.
The design of our algorithm is simple, 
 and thus easily applicable to many kinds of data.
Several statements were proved to ensure that we never miss large 
 clusters in the similarity graph.
We also showed a simple and efficient technique for the data polishing, 
 and showed some complexity results for the computation time of the 
 algorithm under the assumption of the Zipf distribution.

The computational experiments demonstrated the efficiency of our 
 data polishing algorithm regarding accuracy and utility.
Clusters provided with data polishing are usually not split nor
 mixed with others, while clusters often are with existing 
 clustering algorithms.
The sizes, quantities, and granularity of the clusters were good 
 for practice.
In previous studies, the clusters in big data were often used
 to data analysis algorithms such as machine learning,
 image recognition, and prediction.
However, the disadvantages on the utility in the results of existing
 clustering seem to decrease the speed in the practical use.
The development of data polishing would increase the practicality
 of clusters in such areas.

Interesting future work would be to develop data polishing models
 for other kinds of data processing such as bi-clustering, segmentation,
 visualization, flow detection, and anonymization.
It would also be interesting to improve our graph polishing algorithm
 so that we can deal with non-graphic data.\\

\noindent {\bf Acknowledgments}: 
We gratefully thank Professor Takashi Washio of Osaka University, 
Professor Masaru Kitsuregawa of National Institute of Informatics, Professor
 Koji Tsuda and Masashi Sugiyama of Tokyo University, for their valuable
 comments, information, and support.
We also express our appreciation to Teikoku DataBank Limited, Japan, 
 for supplying the Japanese company business relation data.
This research is supported by JST CREST, Japan.

\end{document}